\title{Flexible control of the median of the false discovery proportion} 
\author{Jesse Hemerik\footnote{Econometric Institute, Erasmus University, Burg. Oudlaan 50,
3062 PA Rotterdam, The Netherlands. e-mail: hemerik@ese.eur.nl}
, Aldo Solari\footnote{Department of Economics, Ca' Foscari University, Cannaregio 873, 30121 Venice, Italy.} \phantom{.}and
Jelle J. Goeman\footnote{Department of Biomedical Data Sciences, Leiden University Medical Center, Einthovenweg 20,
2333 ZC Leiden, The Netherlands.}
}
\theoremstyle{plain}
\newtheorem{theorem}{Theorem}
\newtheorem{proposition}{Proposition}
\newtheorem{example}{Example}
\theoremstyle{definition}
\newtheorem{assumption}{Assumption}
\newcommand{\N}{\mathcal{N}}
\newcommand{\R}{\mathcal{R}}
\newcommand{\X}{\mathcal{X}}
\newcommand{\C}{\mathcal{C}}
\definecolor{darkblue}{rgb}{0.0, 0.2, 0.6}
\begin{document}
\maketitle

\begin{abstract}
\noindent 
We introduce a multiple testing procedure that controls the median of the proportion of false discoveries (FDP) in a flexible way. The procedure only requires a vector of p-values as input and is comparable to the Benjamini-Hochberg method, which controls the mean of the FDP. Our method allows freely choosing one or several values of alpha after seeing the data -- unlike Benjamini-Hochberg, which can be very liberal when alpha is chosen post hoc. We prove these claims and illustrate them with simulations.
Our procedure is inspired by a popular estimator of the total number of true hypotheses. We adapt this estimator to provide  simultaneously median unbiased estimators of the FDP, valid for finite samples. This simultaneity allows for the claimed flexibility.  Our approach does not assume independence. 
The time complexity of our method is linear in the number of hypotheses, after sorting the p-values.
\\
\\
\emph{keywords:} control;  estimation;  false discovery proportion; false discovery rate; post hoc
\end{abstract}

\section{Introduction}
Multiple hypothesis testing procedures have the common aim of ensuring that the number of incorrect rejections, i.e. false positives, is likely small. 
The most commonly used multiple testing procedures control either the \emph{family-wise error rate} or  the
\emph{false discovery rate} (FDR) \citep{dickhaus2014simultaneous,harvey2020evaluation}. The false discovery rate is the expected value of the \emph{false discovery proportion} (FDP), which is the proportion of false positives among all rejections of null hypotheses. Controlling the FDR means ensuring that the expected FDP is kept below some pre-specified value $\alpha$ \citep{benjamini1995controlling,benjamini2001control,goeman2014multiple}.

The FDP, which is an unknown quantity, can vary widely about its mean, when the tested variables are strongly correlated \citep{efron2007correlation,schwartzman2011effect,delattre2015new}. 
For this reason, methods have been developed that do not control the FDR or estimate the FDP, but rather provide a confidence interval for the FDP \citep{hemerik2018false}. Some methods provide confidence intervals for several choices of the set of rejected hypotheses, that are simultaneously valid with high confidence \citep{genovese2004stochastic,genovese2006exceedance,meinshausen2006false,hemerik2019permutation, katsevich2020simultaneous, blanchard2020post, goeman2021only,blain2022notip, vesely2023permutation}. 
There are also procedures, including the methods just mentioned, that ensure that the FDP remains small with high confidence \citep{van2004augmentation,lehmann2005generalizations, romano2007control, guo2007generalized, farcomeni2008review, roquain2011type, guo2014further, delattre2015new, ditzhaus2019variability,
dohler2020controlling, basu2021empirical, mie2022}. This is often termed ``false discovery exceedance control.''

Methods that ensure that the FDP remains small with high confidence can provide very clear and useful error guarantees. The downside of these methods, however, is that under dependence, they often do not  have sufficient power to reject any hypotheses, even if there is substantial signal in the data. The reason is that these methods do not merely require that the FDP is small on average, but small with high confidence. As a result, users may prefer approaches with weaker guarantees, such as FDR methods. 
 An alternative is to take $\alpha=0.5$ in FDP methods.

The most popular FDR method is the Benjamini-Hochberg method (BH)  \citep{benjamini1995controlling}.
FDR methods generally require the user to choose $\alpha$ before looking at the data. Common choices for $\alpha$ are $0.05$ and $0.1$. The methods guarantee that the FDR is kept below $\alpha$. However,  researchers would often like to change $\alpha$ post hoc. For example, if no hypotheses are rejected for $\alpha=0.05$, a researcher may want to increase $\alpha$ to 0.1, changing the FDP target in order to obtain more rejections. 
In other cases, the user will  want to decrease $\alpha$.
However, as we show (in Section \ref{secBHnotflexible} and the Supplementary Material), choosing $\alpha$ post hoc can severely invalidate methods such as BH.
Moreover, the user may want to report results for several values of $\alpha$, while providing a simultaneous error guarantee.
There is a need for methods that allow for these types of inference.

In this paper, we provide a class of multiple testing methods that allow to choose the threshold freely after looking at the data. Our methodology only requires a vector of \emph{p}-values as input, is non-asymptotic.
Our procedure controls the \emph{median} of the FDP rather than the \emph{mean}. For this and other reasons, we denote the \emph{target FDP} by $\gamma\in[0,1]$ instead of $\alpha$ \citep[inspired by][]{romano2007control,harvey2020evaluation,basu2021empirical}. Controlling the median means that the FDP is at most $\gamma$ with probability at least $0.5$. We will refer to this as \emph{mFDP control}.  Note that mFDP control can also be obtained with several existing FDP methods, by taking $\alpha=0.5$. 
Like some existing methods,  our procedure is flexible in the sense that $\gamma$ can be freely chosen after seeing the data.
Further, our procedure is adaptive, in the sense that it does not necessarily become conservative if the fraction of false hypotheses is large. 
We prove that our procedure is valid under a novel type of assumption on the joint distribution of the \emph{p}-values. 
  In particular, our method does not require independence. 
Moreover, the method was valid in all simulation settings considered.
Further, we prove that  our procedures are often admissible, i.e., they cannot be uniformly improved \citep{goeman2021only}. 
Since the method in \citet{goeman2019simultaneous} is also flexible and admissible in some settings, we compare with that method in simulations. We also compare  with the elegant and fast method from \citet{katsevich2020simultaneous}. 
Our methodology has been implemented in the \verb|R| package \verb|mFDP|, available on CRAN.

Our procedure is partly inspired by an existing estimator of the fraction $\pi_0\in[0,1]$ of true hypotheses among all hypotheses. This estimator is mentioned in  \citet{schweder1982plots} and advocated in \citet{storey2002direct}.  We will refer to it as the Schweder-Spj{\o}tvoll-Storey estimator. Some  publications  refer to it as Storey's estimator or the Schweder-Spj{\o}tvoll estimator \citep{hoang2022usage}. 
The literature proposes multiple $\pi_0$ estimators based on \emph{p}-values \citep{rogan1978estimating,hochberg1990more,langaas2005estimating,meinshausen2006estimating,rosenblatt2021prevalence}.
As a side result of our investigation of $\pi_0$ and FDP estimation, we add to this literature a novel $\pi_0$ estimator that is slightly different from Schweder-Spj{\o}tvoll-Storey, unless its tuning parameter is  $0.5$.

The proposed methodology also draws from an idea in \citet{hemerik2019permutation}, which is to construct simultaneous FDP bounds, called \emph{confidence envelopes}, in a manner that is partly data-based and partly reliant on a pre-specified family of candidate envelopes. The simultaneity of the constructed bounds allows for post hoc selection of rejection thresholds and hence post hoc specification of $\gamma$. The methodology proposed here is applicable in many situations where the method in \citet{hemerik2019permutation} is not. The reason is that one cannot generally use permutations if one only has \emph{p}-values, which is the setting we assume.

Our mFDP controlling approach  conceptually relates to recent methods that bound the FDR by $\alpha$ by finding the largest \emph{p}-value threshold for which some conservative estimate of the FDP is below $\alpha$ \citep{barber2015controlling,li2017accumulation,lei2018adapt,luo2020competition,lei2021general,rajchert2022controlling}. Those methods do not offer the simultaneity provided in the present paper.

In Section \ref{secestimation}, we start with non-simultaneous estimation of  the number of false positives.  Section \ref{seccontrol} contains the main theoretical results, which build on Section \ref{secestimation}.
In Section \ref{secsims} we use simulations to  investigate properties of our method. 
We find that the method was valid in all considered simulation settings and had   good power compared to competitors, especially in  high-dimensional settings  with many false hypotheses.
The Supplementary Material contains proofs, additional simulations, an analysis of real RNA-Seq data and theoretical extensions of our results.

\section{Median unbiased estimation of the FDP} \label{secestimation}

\subsection{Notation}
Throughout this paper we consider hypotheses $H_1,...,H_m$ and corresponding \emph{p}-values $p_1,...,p_m$, which take values in $(0,1]$. Write $p=(p_1,...,p_m)$. Let $\N=\{1\leq i \leq m: \text{} H_i \text{ is true}\}$ be the set of indices of true hypotheses and let $N=|\N|$ be the number of true hypotheses, which we assume to be strictly positive for convenience. The fraction of true hypotheses is $\pi_0=N/m$. Let $q_1,...,q_N$ denote the the \emph{p}-values corresponding to the true hypotheses, in any order.  Write $q=(q_1,...,q_N)$.

If $t\in(0,1)$, we write $\R(t)=\{1\leq i \leq m: p_i\leq t\}.$ 
We will call $\R=\R(t)$ the set of rejected hypotheses, since $t$ will usually denote the \emph{p}-value threshold. Write $R=|\R|$.
Let $V=|\N\cap\R|$ be the number of true hypotheses in $\R$, i.e., the number of false positive findings.
We write  $a\wedge b$ for the minimum of numbers $a$ and $b$.

\subsection{The Schweder-Spj{\o}tvoll-Storey  estimate} \label{secintrostorey}
The paper's first results, which inspired Section \ref{seccontrol}, follow from a reinvestigation of the Schweder-Spj{\o}tvoll-Storey estimator of  $\pi_0$  \citep{schweder1982plots,storey2002direct}. 
The estimator depends on a tuning parameter in $(0,1)$ that is usually denoted by $\lambda$. For practical reasons we will write the estimator  in terms of $t:=1-\lambda$.  
The estimator is
\begin{equation} \label{eqstorey}
\hat{\pi}_0':=\frac{|\{1\leq i \leq m:p_i>\lambda\}|}{m(1-\lambda)} = \frac{|\{1\leq i \leq m:p_i>1-t\}|}{mt}
\end{equation}

The heuristics behind this estimate are as follows. The non-null \emph{p}-values, i.e., the \emph{p}-values corresponding to false hypotheses, tend to be smaller than $1-t$, so that most of the \emph{p}-values larger than $1-t$ are null \emph{p}-values. Since for point null hypotheses the null p-values are standard uniform, one expects about $t\cdot100\%$ of the null \emph{p}-values to be larger than $1-t$. Hence, a (conservative) estimate of the number of null \emph{p}-values is $t^{-1}|\{i:p_i>1-t\}|$. Thus, $\hat{\pi}_0'$ is an estimate of $\pi_0$. Storey's estimator is related to the concept of accumulation functions, used to estimate false discovery proportions \citep{li2017accumulation, lei2021general}.

 Note that $\hat{\pi}_0'$ can be larger than 1. Consequently, researchers often use 
 $ \hat{\pi}_0:=\hat{\pi}_0'\wedge 1$. This estimate is usually no longer biased upwards, but downwards for large values of $\pi_0$, in particular $\pi_0=1$.

\subsection{Median unbiased estimation of  $V$  and $\pi_0$} \label{secmedunb}

 Here we derive estimators  of  $V$ and $\pi_0$ that are inspired by the  Schweder-Spj{\o}tvoll-Storey estimator. We make the following assumption. 

\begin{assumption} \label{as1}
The following holds: 
\begin{equation} \label{eqas1}
\mathbb{P}\Big\{ \big|\big\{1\leq i \leq N: q_i\leq t\big\}\big|> \big|\big\{1\leq i \leq m: p_i\geq 1-t\big\}\big|    \Big\}\leq 0.5.
\end{equation}
\end{assumption}
Assumption \ref{as1} says that the number of small ($\leq t$) null p-values tends to be smaller than the number of large ($\geq 1-t$) \emph{p}-values (null and non-null).
Note that this assumption is satisfied in particular if
\begin{equation} \label{eqas2}
\mathbb{P}\Big\{ \big|\big\{1\leq i \leq N: q_i\leq t\big\}\big|> \big|\big\{1\leq i \leq N: q_i\geq 1-t\big\}\big|    \Big\}\leq 0.5.
\end{equation}
Further, note that the probability in \eqref{eqas2} is equal to 
\begin{equation} \label{eq2}
\mathbb{P}\Big\{ \big|\big\{1\leq i \leq N: q_i\leq t\big\}\big|> \big|\big\{1\leq i \leq N: 1-q_i\leq t\big\}\big|    \Big\}.
\end{equation}

If the null \emph{p}-values $q_1,...,q_N$ are independent and standard uniform, then Assumption \ref{as1} is clearly satisfied.
As another example, suppose $q=(q_1,...,q_N)$ is symmetric about $1/2$, i.e., 
\begin{equation} \label{eqsymm}
(q_1,...,q_N) \,{\buildrel d \over =}\, (1-q_1,...,1-q_N).
\end{equation}
Then property \eqref{eq2} and hence Assumption \ref{as1} also hold.
The symmetry property \eqref{eqsymm} holds for instance  if $q_1,...,q_N$ are left- or right-sided \emph{p}-values from  Z-tests based on test statistics $Z_1,...,Z_m$ with joint $\N(0,\Sigma)$ distribution.   
 Further, note that  null \emph{p}-values that are stochastically larger than uniform, or  the presence of many non-null’s makes it easier for Assumption \ref{as1} to be satisfied.

 
Note that if $t$ is used as a rejection threshold, the number of false positive findings is
$$ V(t):=  \big|\big\{1\leq i \leq N: q_i\leq t\big\}\big|.$$
Under  Assumption \ref{as1},
 with probability at least $0.5$, we have
  \begin{equation} \label{overlineV}
  V(t) \leq  \overline{V}(t):= 
  |\{1\leq i \leq m: p_i\geq 1-t\}|.
 \end{equation}
In other words, $\overline{V}(t)$ is a $50\%$-confidence upper bound for $V(t)$. We will refer to such bounds as \emph{median unbiased} estimators for brevity, although writing  `not-downward biased' instead of `unbiased' would be more precise.

 This result also leads to a median unbiased estimator of $\pi_0$. Indeed,
if $V \leq \overline{V} $, then $\R$ contains at least $R-\overline{V}$ false hypotheses, so that $\pi_0$ is at most 
 $$\frac{m- R+\overline{V} }{m}= \frac{m- |\{1\leq i \leq m: p_i\leq t\}|+\big|\big\{1\leq i \leq m: p_i\geq 1-t\big\}\big|  }{m}.$$
A rewrite gives the following result.

 \begin{theorem}
 Suppose Assumption \ref{as1} is satisfied.
 Then $\overline{V}(t)$, defined at \eqref{overlineV}, is a median unbiased estimate of $V(t)$.
As a consequence, $\overline{\pi}_0:= \overline{\pi}_0'\wedge 1$, where
  $$\overline{\pi}_0' =  \frac{|\{1\leq i \leq m: p_i> t\}|+\big|\big\{1\leq i \leq m: p_i\geq 1-t\big\}\big|  }{m},$$ is a median unbiased estimate of $\pi_0$.
 Further,  if $t=0.5$ and no p-value equals $t$, then $\overline{\pi}_0'$ is equal to the Schweder-Spj{\o}tvoll-Storey estimate 
$\hat{\pi}_0'$.
    \end{theorem}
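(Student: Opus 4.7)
The plan is to handle the three claims of the theorem in order, noting that the first is essentially a restatement of Assumption \ref{as1} and the other two are short consequences.

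For the first claim, I would observe that Assumption \ref{as1} is, by definition of $V(t)$ and $\overline{V}'(t)$, exactly the statement $\mathbb{P}\{V(t) > \overline{V}'(t)\} \leq 0.5$, which is equivalent to $\mathbb{P}\{V(t) \leq \overline{V}'(t)\} \geq 0.5$. This is precisely the paper's notion of a median unbiased (upward) estimate, so the first claim is immediate.

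For the second claim, I would simply reproduce the argument given just before the theorem statement: on the event $\{V(t) \leq \overline{V}'(t)\}$, the rejection set $\mathcal{R}(t)$ contains at most $\overline{V}'(t)$ true nulls, so it contains at least $R(t) - \overline{V}'(t)$ false nulls, giving
\[
N \leq m - R(t) + \overline{V}'(t) = |\{p_i > t\}| + |\{p_i \geq 1-t\}|,
\]
and dividing by $m$ yields $\pi_0 \leq \overline{\pi}_0'$ on this event. Since $\pi_0 \leq 1$ deterministically, we also have $\pi_0 \leq \overline{\pi}_0' \wedge 1 = \overline{\pi}_0$. Because this event has probability at least $0.5$ by the first claim, $\overline{\pi}_0$ is median unbiased for $\pi_0$.

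For the third claim, the argument is purely algebraic. Setting $t = 0.5$ in the definition of $\overline{\pi}_0'$ gives $\overline{\pi}_0' = \bigl(|\{p_i > 0.5\}| + |\{p_i \geq 0.5\}|\bigr)/m$. Under the assumption that no $p_i$ equals $0.5$, the sets $\{p_i > 0.5\}$ and $\{p_i \geq 0.5\}$ coincide, so the numerator becomes $2|\{p_i > 0.5\}|$, giving $\overline{\pi}_0' = |\{p_i > 0.5\}|/(m/2)$. Comparing with \eqref{eqstorey} at $t = 0.5$ (equivalently $\lambda = 0.5$) yields $\overline{\pi}_0' = \hat{\pi}_0'$.

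No step looks to be a substantial obstacle; the content of the theorem is essentially packaging Assumption \ref{as1} together with a counting identity and a trivial specialization at $t=0.5$. The only place where one has to be slightly careful is the handling of the indicator $\{p_i \geq 1-t\}$ versus $\{p_i > 1-t\}$ in the boundary case of the third claim, which is precisely why the hypothesis ``no $p$-value equals $t$'' is imposed.
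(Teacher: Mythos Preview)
Your proposal is correct and follows essentially the same approach as the paper, which proves this theorem inline in the text preceding the statement rather than in the appendix: the first claim is a restatement of Assumption~\ref{as1}, the second is the counting argument the paper gives just before the theorem, and the third is the straightforward algebraic check at $t=0.5$ (which the paper leaves implicit).
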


      Thus, if the \emph{p}-values are continuous and $t=0.5$, then $\overline{\pi}_0'=\hat{\pi}_0'$ with probability 1. For other values of $\lambda$, we obtain a median unbiased estimate $\overline{\pi}_0'$ that is slightly different from $\hat{\pi}_0'$. 
 In the Supplementary Material, we provide a theoretical comparison of $\mathbb{E}(\overline{\pi}_0')$ versus $\mathbb{E}(\hat{\pi}_0')$.
In the Supplementary Material we also  obtain the estimate $\overline{\pi}_0'$ in an alternative way and, doing so, discover a broader class of $\pi_0$ estimators.

 We write  $\overline{\pi}_0= \min\{\overline{\pi}_0',1\}.$
 In Example \ref{rex1} and the corresponding Figure \ref{fig:Storey}, the Schweder-Spj{\o}tvoll-Storey method is applied to 500 simulated \emph{p}-values.

      \begin{example}[Running example, part 1: estimating $\pi_0$ and $V$]  \label{rex1}
As a toy example we generated 500 independent \emph{p}-values, 400 of which were uniformly distributed on $[0,1]$ and 100 of which were stochastically smaller than uniform on $[0,1]$. Thus, we can say that $N=400$. A scatterplot of the sorted \emph{p}-values is shown in Figure \ref{fig:Storey}, as well as a visual illustration of how Storey's estimate $\hat{\pi}_0 m$ of the number of true hypothes is computed, in case $\lambda=1-t=0.8$. Often $\lambda$ is taken smaller, but considering small $t$ instead will turn out to be useful. In this example, Storey's estimate $\hat{\pi}_0 \cdot m$ was 410 and our estimate, which is less easy to visualize, was $\overline{\pi}_0 \cdot m=402$. Thus, the estimates were close, as is often the case. 
Since property \eqref{eqsymm} and hence Assumption \ref{as1} is satisfied, we know that $\overline{\pi}_0$ is a median unbiased estimator of  $\pi_0$. In particular, we know with $50\%$ confidence that there are at least $500-402=98$ false hypotheses in total.

As explained in this section, we can make this statement stronger by
noting that $R(t)=180$ and $\overline{V}(t)=82$. The latter means that we know with $50\%$ confidence that there are at least $180-82=98$ false hypotheses among the hypotheses with p-values below $t=0.2$.
\end{example}

\begin{figure}[ht] 
\centering
  \includegraphics[width=0.8\linewidth]{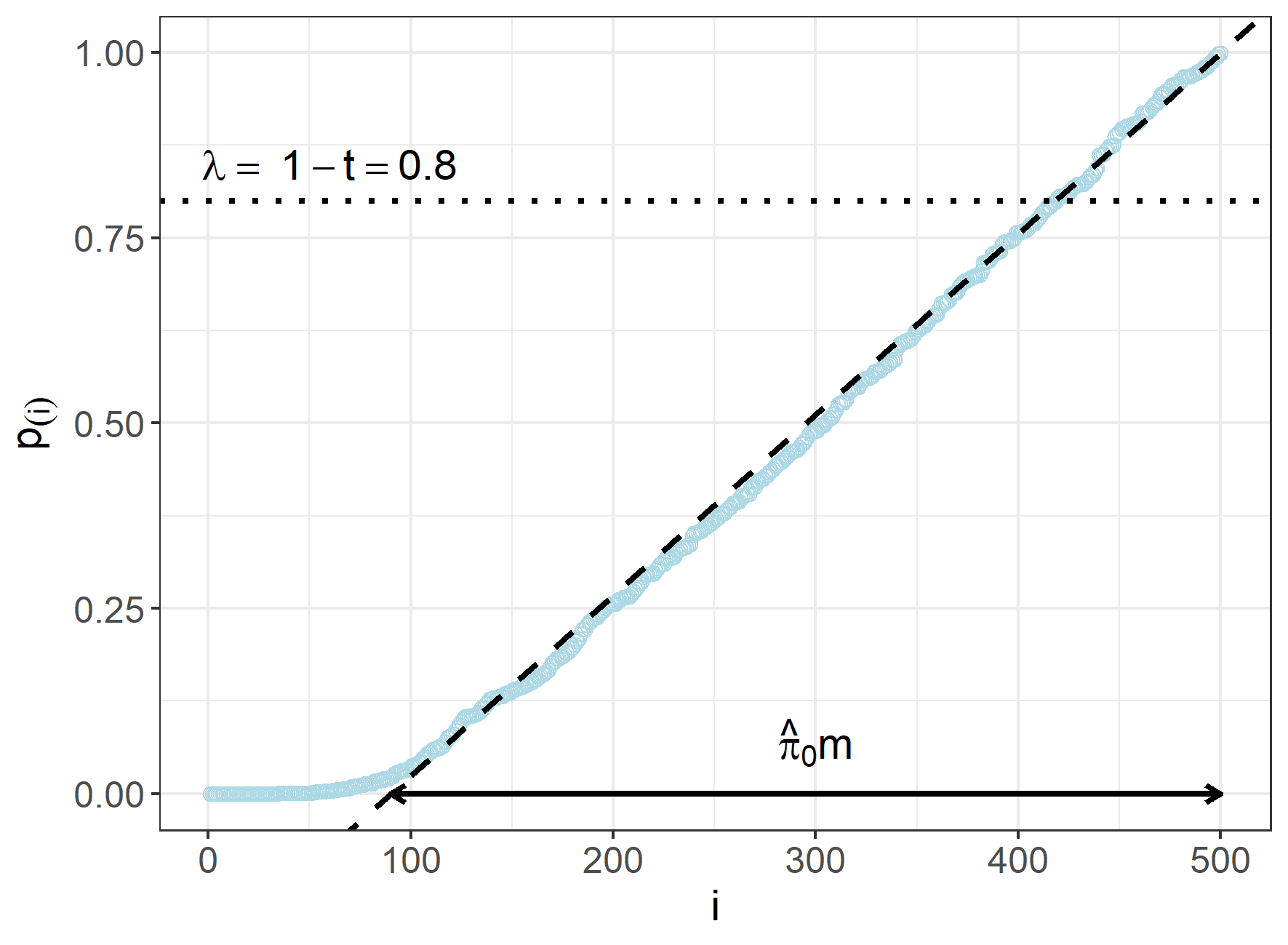}
  \caption{Illustration of the computation of the Schweder-Spj{\o}tvoll-Storey estimate $\hat{\pi}_0$, based on 500 sorted simulated \emph{p}-values. The dashed, straight line is constructed in such a way that it goes through both (500,1) and the point where the dotted line intersects the curve of \emph{p}-values, roughly speaking.} \label{fig:Storey}
\end{figure}

\subsection{Median unbiased estimation of the FDP} \label{FDPfixedc}

  Define the FDP to be the proportion of false positives,
     $$FDP=\frac{V}{R},  \quad  FDP(t)=\frac{V(t)}{R(t)},$$
     which is understood to be $0$ when $R=0$.  
      The median unbiased estimate $\overline{V}$ immediately implies a median unbiased estimate of the FDP. 
    \begin{theorem}
     Suppose Assumption \ref{as1} is satisfied.
   The variable $\overline{FDP}(t)=\overline{V}(t)/R(t)$ is a median unbiased estimator for the FDP, i.e.,
   \begin{equation} \label{medianunb}
   \mathbb{P}\Big\{FDP(t)\leq \overline{FDP}(t)\Big\}\geq 0.5.
   \end{equation}
    \end{theorem}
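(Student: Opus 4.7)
The plan is to deduce the statement directly from the preceding theorem, which under Assumption~\ref{as1} asserts $\mathbb{P}(V(t)\leq \overline{V}(t))\geq 0.5$. Since both $FDP(t)$ and $\overline{FDP}(t)$ share the same denominator $R(t)$, the claim essentially reduces to a pointwise inequality between numerators, and I would structure the proof as a short case analysis on whether $R(t)$ is zero or positive.

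First I would handle the generic case $R(t)>0$. Dividing both sides of the inequality $V(t)\leq \overline{V}(t)$ by the strictly positive quantity $R(t)$ preserves the order, so on this event one has $V(t)/R(t)\leq \overline{V}(t)/R(t)$, i.e., $FDP(t)\leq \overline{FDP}(t)$. Then I would address the case $R(t)=0$. Here $V(t)\leq R(t)=0$ forces $V(t)=0$, and by the convention adopted for the FDP (namely that it is set to $0$ whenever no hypothesis is rejected, and analogously for $\overline{FDP}(t)$) the inequality $FDP(t)\leq \overline{FDP}(t)$ holds trivially. Putting the two cases together gives the deterministic inclusion $\{V(t)\leq \overline{V}(t)\}\subseteq \{FDP(t)\leq \overline{FDP}(t)\}$.

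Taking probabilities and invoking the previous theorem yields
$$\mathbb{P}\bigl(FDP(t)\leq \overline{FDP}(t)\bigr)\;\geq\; \mathbb{P}\bigl(V(t)\leq \overline{V}(t)\bigr)\;\geq\; 0.5,$$
which is precisely \eqref{medianunb}. The only point of real care, and therefore the sole candidate for a ``main obstacle'', is fixing the convention for $\overline{FDP}(t)$ when $R(t)=0$ so that the division by zero does not cause an issue; with the natural choice described above this is immediate, so the theorem is essentially a corollary of the median-unbiasedness of $\overline{V}(t)$ already established.
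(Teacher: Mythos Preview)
Your proof is correct and follows the same approach as the paper, which simply remarks that if $V\leq \overline{V}$ then $FDP\leq \overline{FDP}$. Your version is more detailed in handling the $R(t)=0$ case explicitly, but the underlying argument is identical.
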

    To prove this, we only need to remark that if $V\leq \overline{V}$, then  $FDP\leq \overline{FDP}$.

\section{Controlling the mFDP} \label{seccontrol}

\subsection{Overview of our method and comparison with FDR control} \label{secoverv}
In section \ref{FDPfixedc} we considered a fixed rejection threshold $t$ and provided a median unbiased estimate for $FDP(t)$. In many situations, one would like to adapt the threshold $t$ based on the data, in such a way that one still obtains a valid median unbiased estimate. Note that naively choosing $t$ in such a way that an attractive (low) estimate of the FDP is obtained, can invalidate the procedure, in the sense that inequality \eqref{medianunb} not longer holds. In Section \ref{secsimbounds} however, we derive a method that provides median unbiased bounds for a large range of $t$, in such a way that with probability at least $0.5$, the bounds are simultaneously valid for all $t$. 

Specifically, we let the user choose some range $\mathbb{T}\subseteq [0,1]$ of rejection thresholds $t$ of interest, before looking at the data. 
Usually a good choice for $\mathbb{T}$ will be $[0,1/2]$ or another interval starting at $0$.
Then we provide $50\%$-confidence upper bounds $B(t)$ for $V(t)$ that are simultaneously valid over all $t \in \mathbb{T}$:
\begin{equation} \label{envelope}
 \mathbb{P}\Big[\bigcap_{t\in \mathbb{T}} \{V(t)\leq B(t)\}\Big]\geq 0.5.
\end{equation}
It then immediately follows that  $B(t)/R(t)$, $t\in \mathbb{T}$, are simultaneously valid $50\%$-confidence bounds for $FDP(t)$:
$$ \mathbb{P}\Big[\bigcap_{t\in \mathbb{T}} \{FDP(t)\leq B(t)/R(t)\}\Big]\geq 0.5.$$

Since the threshold $t$ can be chosen based on the data, it can be picked such that $B(t)/R(t)$ is low. In particular, one can prespecify a value $\gamma\in[0,1]$, for example $\gamma=0.05$, and take the threshold $t\in \mathbb{T}$ to be the largest value for which  $B(t)/R(t)\leq \gamma$, if such a $t$ exists. This means that our method can be used to reject a set of hypotheses in such a way that the median of the FDP is bounded by $\gamma$:
$$\mathbb{P}(FDP\leq \gamma)\geq 0.5.$$
In other words, we can control the median of the FDP, which we will call the \emph{mFDP}. 
Our method is an example of  false discovery exceedance control, but with the added property that $\gamma$ can be chosen post hoc, as we discuss below.
Our notation `$\gamma$' is in line with e.g. \citet{romano2007control} and \citet{basu2021empirical}.

Our method is related to the popular BH procedure, which ensures that $\mathbb{E}(FDP)\leq \gamma$ \citep{benjamini1995controlling}. BH ensures that the \emph{mean} of the FDP is controlled, while we ensure that the \emph{median} of the FDP is controlled.
The mean and the median of the FDP can be asymptotically equal in some settings where the dependencies among  the \emph{p}-values are not too strong \citep{neuvial2008asymptotic,ditzhaus2019variability}, but there is no general guarantee that they are similar \citep{romano2006stepup,schwartzman2011effect}. Especially under strong dependence, $mFDP\leq \gamma$  does not need to imply $\mathbb{E}(FDP)\leq \gamma$, while the converse does hold in many practical situations. Moreover, unlike mFDP control, FDR control always implies weak control of the family-wise error rate \citep[][Section 6.4]{romano2008formalized}. Note, however, that before applying any multiple testing method, we could first perform a global test, to enforce weak family-wise error rate control \citep{bernhard2004global}.

The most important advantage of our method over BH, is that it provides simultaneous 50\% confidence bounds for the FDP.  This allows simultaneous as well as post hoc inference, in the sense that $t\in\mathbb{T}$ can be chosen after seeing the data. Further, we can choose multiple values of  $t$ and obtain simultaneously valid statements on the FDP. Moreover, we can choose the target FDP $\gamma$ post hoc. With BH, such inference is not possible: if we choose $\gamma$ after seeing the data, then BH can become very anti-conservative. This is discussed in subsection \ref{secBHnotflexible}.

\subsection{Benjamini-Hochberg is not flexible} \label{secBHnotflexible}
The main advantage of the method that we will propose, is that it allows choosing one or several rejection thresholds or  target FDPs after seeing the data. This contrasts our method with BH. Indeed, when the target FDR $\alpha$ (or $\gamma$ in our notation) is chosen based on the data, then BH no longer guarantees that 
$\mathbb{E}(FDP)\leq\alpha$, conditional on the post hoc chosen $\alpha$.
Note that when testing a single hypothesis, choosing $\alpha$ post hoc is not generally valid either  
 \citep{hubbard2004alphabet,grunwald2022beyond}.
For simulations illustrating that BH is not valid post hoc, see the Supplementary Material. Another related result is Fig. 5 in \citet{katsevich2020simultaneous}, which illustrates based on simulations  that BH does not have a simultaneous interpretation.
We now provide some mathematical examples that prove that BH is not valid post hoc.

Suppose all $m$ hypotheses are true and the \emph{p}-values are mutually independent and uniformly  distributed on $(0,1]$. BH provides $m$ \emph{adjusted} \emph{p}-values and rejects all hypotheses with adjusted \emph{p}-values that are at most $\alpha$. Let  $p_{(1)}^{bh}$ denote the smallest adjusted \emph{p}-value. 
It is well known that if $\alpha\in[0,1]$ is prespecified and all \emph{p}-values are independent and uniform on $(0,1]$, then the probability that BH rejects any hypotheses is exactly $\alpha$ \citep{goeman2011multiple}.
BH rejects any hypotheses if and only if $p_{(1)}^{bh}\leq \alpha$. 
Thus, $p_{(1)}^{bh}$ is uniform on $(0,1]$.

As a simple example of a post hoc chosen $\alpha$, take  $\alpha:=p_{(1)}^{bh}$. We now show that we then no longer have   $\mathbb{E}(FDP/\alpha)\leq 1$. 
Since  $\alpha= p_{(1)}^{bh}$, we know that $\alpha$ is uniform on $(0,1]$.
By definition of $\alpha$, there is always at least one rejected hypothesis. Since all hypotheses are true, this means that we always have $FDP=1$.
 Consequently,
 $$\mathbb{E}(FDP/\alpha) = \mathbb{E}(1/\alpha) = \int_0^1 x^{-1} dx =log(x) \Big|_0^1 =\infty,$$
i.e., $\mathbb{E}(FDP/\alpha)$ is completely out of control.

 Of course, we have considered an extreme situation, where $\alpha$ can take any value. We now consider a less extreme situation, where we only allow $\alpha$ to take two values, say, $a_1$ and $a_2$, 
with $0<a_1\leq a_2<1$. Specifically, we define $\alpha$ to be $a_1$ if $ p_{(1)}^{bh}\leq a_1$ and otherwise we take $\alpha=a_2$. This mimics the psychology of a researcher who uses $a_2$ as a default value for $\alpha$ but takes $\alpha$ to be $a_1$ if this still leads to at least one rejection.
Note that if $ p_{(1)}^{bh}> a_2$, then we reject nothing, so that $FDP=0$.
Thus, with this definition of $\alpha$, we have
 $$\mathbb{E}(FDP/\alpha) =$$
 $$ \mathbb{P}\Big(p_{(1)}^{bh}\leq a_1\Big)\mathbb{E}\Big(FDP/a_1 \Big| p_{(1)}^{bh}\leq a_1\Big) +  \mathbb{P}\Big(a_1 < p_{(1)}^{bh}\leq a_2\Big) \mathbb{E}\Big(FDP/\alpha \Big| a_1< p_{(1)}^{bh}\leq a_2\Big)=$$
 $$
 a_1\mathbb{E}\Big(1/\alpha \Big| p_{(1)}^{bh}\leq a_1\Big) +  (a_2-a_1) \mathbb{E}\Big(1/\alpha \Big| a_1< p_{(1)}^{bh}\leq a_2\Big)=
 $$
 $$  a_1/a_1   +  (a_2-a_1)/a_2  = 1   +  (a_2-a_1)/a_2,$$ 
 which always exceeds 1, except if $a_1=a_2$.
 As an example, take $a_1=0.05$ and $a_2=0.1$, which are values often used in practice. This defines a rather limited set of allowed values for $\alpha$. Nevertheless, we find that $\mathbb{E}(FDP/\alpha)=1.5$, which is already much larger than 1. The reader can check that if we allow $\alpha$ to take more than two values, then $\mathbb{E}(FDP/\alpha)$ can become huge. Indeed, if we allow $\alpha$ to take any value in $(0,1]$, then $\mathbb{E}(FDP/\alpha)$ can become infinity, as we saw in the previous example where $\alpha= p_{(1)}^{bh}$.

  These examples show that if $\alpha$ depends on the data, then marginally we often  have 
 $\mathbb{E}(FDP/\alpha)> 1$. 
 This means in particular that conditional on  $\alpha$ taking a certain value, we do not generally have  $\mathbb{E}(FDP)\leq  \alpha$. A simulation study that illustrates this point in various other settings, is in the Supplementary Material.


 



\subsection{Simultaneous bounds for the FDP} \label{secsimbounds}

Let $\mathbb{N}$ be the set of natural numbers. We call a function $B: \mathbb{T}\rightarrow \mathbb{N}$ a \emph{confidence envelope} is it satisfies inequality \eqref{envelope}  \citep[cf.][]{hemerik2019permutation}. We restrict ourselves to such 50\% confidence envelopes and do not consider e.g. 95\% confidence envelopes.
Let $\mathbb{B}$ be a set of maps $\mathbb{T}\rightarrow \mathbb{N}$. Assume that $\mathbb{B}$ is monotone, in the sense for all $B,B'\in \mathbb{B}$, either $B\geq B'$ or $B'\geq B$. Here $B\geq B'$ means that $B(t)\geq B'(t)$ for all $t\in \mathbb{T}$. We call $\mathbb{B}$ the  \emph{family of candidate envelopes}  \citep[cf.][]{hemerik2019permutation}.

We will obtain a confidence envelope by picking the smallest $B\in \mathbb{B}$ for which $B(t)\geq \overline{V}(t)$ for all $t\in \mathbb{T}$. We call this envelope $\tilde{B}$:
$$\tilde{B}=\tilde{B}(p)=\min\Big\{B\in \mathbb{B}: \bigcap_{t\in \mathbb{T}} \big\{B(t)\geq\overline{V}(t)\big\} \Big\},$$
If $r$ is a vector containing, say, $l_r$ \emph{p}-values, then we write $\R(r,t)=\{1\leq i \leq l_r: r_i<t\}$,  to make the dependence on the \emph{p}-values explicit. Analogously we define ${V}(r,t)$,  $\overline{V}(r,t)$ and $\tilde{B}(r)$. We use the convention that  $\R(t)=\R(p,t)$, ${V}(t)={V}(p,t)$, $\overline{V}(t)=\overline{V}(p,t)$ and $\tilde{B}=\tilde{B}(p)$.


We will only require the following assumption.

\begin{assumption} \label{assB}
The following holds:
\begin{equation} \label{asuniform}
   \mathbb{P}\big\{\tilde{B}(p)\geq \tilde{B}(1-q)\big\}\geq 0.5, 
\end{equation}
\end{assumption}
Due to the monotonicity of the set $\mathbb{B}$, we always have either  $\tilde{B}(q)< \tilde{B}(1-q)$ or  $\tilde{B}(q)\geq  \tilde{B}(1-q)$. 
If the latter inequality has the largest probability, then Assumption \ref{assB} is always satisfied, since $\tilde{B}(p)\geq \tilde{B}(q)$.
 Assumption \ref{assB} is a generalization of Assumption \ref{as1}, in the sense that if $\mathbb{T}$ is equal to the singleton $\{t\}$ then Assumptions \ref{as1} and \ref{assB} will  coincide, for most reasonable choices of $\mathbb{B}$ (e.g. for $\mathbb{B}$ as in Section \ref{secspecific}).

 Assumption \ref{assB} always holds if property \eqref{eqsymm} is satisfied, regardless of our choice of $\mathbb{B}$.
Indeed, if \eqref{eqsymm} holds, we have 
$\mathbb{P}\big\{\tilde{B}(p)\geq \tilde{B}(1-q)\big\} \geq \mathbb{P}\big\{\tilde{B}(q)\geq \tilde{B}(1-q)\big\} = \mathbb{P}\big\{\tilde{B}(1-q)\geq \tilde{B}(q)\big\}$. Since the latter two probabilities are equal, they are both at least $0.5$, so that Assumption \ref{assB} is satisfied.
Moreover, property  \eqref{eqsymm} is not necessary for Assumption \ref{assB} to hold, as confirmed by our simulations.

Let $[\cdot]^+$ be the positive part function.
The following theorem 
states that $\tilde{B}$ provides simultaneously valid $50\%$-confidence bounds. 

\begin{theorem} \label{thmenvelope}
Suppose Assumption \ref{assB} holds.
Then the function $\tilde{B}$ is a confidence envelope, i.e.,
$$ \mathbb{P}\Big[\bigcap_{t\in \mathbb{T}} \{V(t)\leq \tilde{B}(t)\}\Big]\geq 0.5$$
and
$$ \mathbb{P}\Big[\bigcap_{t\in \mathbb{T}} \{FDP(t)\leq \tilde{B}(t)/R(t)\}\Big]\geq 0.5.$$

In addition, $\tilde{B}': \mathbb{T} \rightarrow \mathbb{N}$ defined by 
$$\tilde{B}'(t) = R(t) -\max\{[R(l)-\tilde{B}(l)]^+: l\in \mathbb{T}, l\leq t)\} ,$$
which satisfies $\tilde{B}'\leq  \tilde{B}$, is also a confidence envelope and potentially improves $\tilde{B}$.
\end{theorem}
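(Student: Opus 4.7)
The plan is to reduce both claims to a single pointwise dominance event between $\tilde{B}(p)$ and $\tilde{B}(1-q)$. The key algebraic identity is
$V(t) = |\{i : q_i \leq t\}| = |\{i : 1-q_i \geq 1-t\}| = \overline{V}'(1-q, t)$
for every $t \in [0,1]$. By the very definition of $\tilde{B}$ applied to the vector $1-q$, the map $\tilde{B}(1-q)$ is the smallest element of $\mathbb{B}$ that dominates $\overline{V}'(1-q, \cdot) = V(\cdot)$ on $\mathbb{T}$. In particular $\tilde{B}(1-q)(t) \geq V(t)$ for every $t \in \mathbb{T}$, deterministically. This is the step that makes the whole proof essentially trivial once noticed.

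Next I would use the monotonicity of $\mathbb{B}$: since any two members of $\mathbb{B}$ are pointwise comparable, the event $\{\tilde{B}(p) \geq \tilde{B}(1-q)\}$ is unambiguously defined, and by Assumption \ref{assB} it has probability at least $1/2$. On this event, $\tilde{B}(p)(t) \geq \tilde{B}(1-q)(t) \geq V(t)$ simultaneously for all $t \in \mathbb{T}$, which is exactly the first displayed inequality. Dividing by $R(t)$ on $\{R(t) > 0\}$ (and using the convention $FDP(t)=0$ otherwise) yields the second displayed inequality.

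For the improved envelope $\tilde{B}'$, I would work on the event $\mathcal{E} = \{V(l) \leq \tilde{B}(l) \text{ for all } l \in \mathbb{T}\}$, which has probability at least $1/2$ by what has just been shown. The driving observation is the monotonicity of the true-discovery count $S(l) = R(l) - V(l)$ in the threshold, since $\mathcal{R}(l) \subseteq \mathcal{R}(t)$ whenever $l \leq t$. So on $\mathcal{E}$, for every $l \leq t$ in $\mathbb{T}$, $S(t) \geq S(l) \geq R(l) - \tilde{B}(l)$, and since also $S(t) \geq 0$ we may replace the right-hand side by its positive part. Maximizing over $l \leq t$ and rearranging gives $V(t) \leq \tilde{B}'(t)$; the dominance $\tilde{B}' \leq \tilde{B}$ follows by taking $l = t$ in the defining maximum.

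The main obstacle is conceptual rather than technical: recognizing that Assumption \ref{assB} is precisely the right abstraction, repackaging the symmetry-about-$1/2$ intuition behind Assumption \ref{as1} into a coupling between the observed $p$ and the hypothetical vector $1-q$ through the monotone family $\mathbb{B}$. Once the identity $V(\cdot) = \overline{V}'(1-q, \cdot)$ is in hand, the simultaneity over $\mathbb{T}$ is automatic and no union bound or additional concentration argument is required.
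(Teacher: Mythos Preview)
Your proof is correct and follows essentially the same route as the paper's own argument: both hinge on the identity $V(\cdot)=\overline{V}'(1-q,\cdot)$, so that $\tilde{B}(1-q)$ deterministically dominates $V$ on $\mathbb{T}$, and then use Assumption~\ref{assB} to transfer this to $\tilde{B}(p)$; the improvement $\tilde{B}'$ is handled in both cases via the monotonicity of $S(\cdot)=R(\cdot)-V(\cdot)$. Your treatment is in fact slightly more explicit, since you also verify $\tilde{B}'\leq\tilde{B}$ by taking $l=t$ in the maximum.
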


\emph{Discussion of the proof.} The proof is the Supplementary Material, but here we will give the intuition. 
First of all, note that $\overline{V}(t)$  is a $50\%$ confidence bound for $V(t)$, but not simultaneously over all $t$.
The reason is that if multiple events have probability $0.5$, then the probability that all events happen is usually smaller than $0.5$. For example, if for $t_1$, $t_2\in(0,1)$ we have 
$ \mathbb{P}\big(V(t_j)\leq \overline{V}(t_j)\big)\geq 0.5$ for $j=1$ and $j=2$, then we do not generally have  $ \mathbb{P}\big(V(t_1)\leq \overline{V}(t_1) \text{ and } V(t_2)\leq \overline{V}(t_2)\big)\geq 0.5$.
To get a simultaneous bound for $V(t)$, we usually need a stricter requirement, i.e., it is not sufficient to simply define $\tilde{B}(t) = \overline{V}(t)$. 
In the proof, we note that if $\tilde{B}(p)\geq \tilde{B}(1-q)$, then $\tilde{B}(t) \geq V(t)$ for all $t$. It thus  follows from  Assumption 2 that our $\tilde{B}$ is a confidence envelope.
That  $\tilde{B}$ is chosen from a fixed, monotone family is not directly used in the proof.
However, if $\tilde{B}$ is chosen from such  a family, then $\tilde{B}(p)\geq \tilde{B}(q)$ and it follows that if  \eqref{eqsymm} holds, then
 Assumption 2  is satisfied. 
Thus, that  $\tilde{B}$ is chosen from a fixed, monotone family  makes Assumption 2 reasonable. It also allows defining $\tilde{B}$ as a simple minimum.


In the rest of this subsection, we  provide an extention of the bounds $\tilde{B}'(t)$  and a result on admissibility. 
It turns out that $\tilde{B}'$ coincides with an envelope obtained through a novel \emph{closed testing}-based procedure, in the sense of  \citet{goeman2011multiple} and \citet{goeman2021only}. 
This novel procedure provides a  $50\%$ confidence bound for the number of true hypotheses in $I$, for \emph{every} subset $I\subseteq \{1,...,m\}$. These bounds are all simultaneously valid with probability at least $50\%$. We denote these bounds by $\overline{B}(I)$.

\begin{theorem} \label{thmctmethod}
 Write $\mathcal{M}=\{I\subseteq\{1,...,m\}:I\neq\emptyset\}.$
For every $I\in\mathcal{M}$ and $t\in\mathbb{T}$, define $R_I(t):=|\R(t)\cap I|=|\{i\in I: p_i\leq t\}|$. Write 
\begin{equation}  \label{formctbound}
\overline{B}(I):= \max\{|A|: \emptyset\neq A\subseteq I\text{ and }\forall t\in\mathbb{T}: R_A(t)\leq \tilde{B}'(t)\}. 
\end{equation} 
where the maximum of an empty set is interpreted as 0.

Assume $\mathbb{T}\subseteq[0,1/2)$ and $ \mathbb{P}\big\{\tilde{B}(q)\geq  \tilde{B}(1-q)\big\}\geq 0.5 $.
Then $$\mathbb{P}\Big[ \bigcap_{I\in\mathcal{M}} \big\{ |\N\cap I|\leq    \overline{B}(I) \big\}   \Big]\geq0.5,$$
i.e., the $\overline{B}(I)$ are simultaneous $50\%$ confidence bounds for the number of true hypotheses in $I$. In particular, the function $\mathbb{T}\rightarrow\mathbb{N}$ defined by $t\mapsto \overline{B}(\R(t))$ is a confidence envelope.

\end{theorem}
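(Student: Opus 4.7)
The plan is to reduce everything to the simultaneous envelope for $\tilde{B}'$ already supplied by Theorem \ref{thmenvelope}. First I would observe that the stronger symmetry hypothesis $\mathbb{P}\{\tilde{B}(q)\geq \tilde{B}(1-q)\}\geq 0.5$ implies Assumption \ref{assB}. Indeed, $\overline{V}'(p,t)\geq \overline{V}'(q,t)$ pointwise (adjoining the non-null p-values can only enlarge the count), so every $B\in\mathbb{B}$ dominating $\overline{V}'(p,\cdot)$ also dominates $\overline{V}'(q,\cdot)$, and by the monotonicity of $\mathbb{B}$ this yields $\tilde{B}(p)\geq \tilde{B}(q)$ deterministically. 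Hence $\{\tilde{B}(q)\geq \tilde{B}(1-q)\}\subseteq \{\tilde{B}(p)\geq \tilde{B}(1-q)\}$, which is exactly Assumption \ref{assB}. By Theorem \ref{thmenvelope} the event
$$E \;=\; \bigcap_{t\in\mathbb{T}}\{V(t)\leq \tilde{B}'(t)\}$$
therefore has probability at least $0.5$.

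The remainder of the argument is a single deterministic observation on $E$. Fix any $I\in\mathcal{M}$. If $\N\cap I=\emptyset$ then $|\N\cap I|=0\leq \overline{B}(I)$ with nothing to prove. Otherwise let $A_\star=\N\cap I$, which is a non-empty subset of $I$, and note that for every $t\in\mathbb{T}$
$$ R_{A_\star}(t) \;=\; |\N\cap I\cap \R(t)| \;\leq\; |\N\cap \R(t)| \;=\; V(t) \;\leq\; \tilde{B}'(t) $$
on $E$. Hence $A_\star$ is feasible in the maximisation \eqref{formctbound}, so $\overline{B}(I)\geq |A_\star|=|\N\cap I|$. Intersecting over $I\in\mathcal{M}$ gives the simultaneous coverage claim. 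For the final assertion I would specialise to $I=\R(t)$: whenever $\R(t)$ is non-empty the bound yields $V(t)=|\N\cap\R(t)|\leq \overline{B}(\R(t))$ on $E$, and the case $\R(t)=\emptyset$ is trivial (under the natural convention $\overline{B}(\emptyset)=0$), so $t\mapsto \overline{B}(\R(t))$ is indeed a confidence envelope.

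The proof is short precisely because the definition of $\overline{B}(I)$ is engineered so that the true-null restriction $\N\cap I$ is an automatic feasible candidate whenever $V$ is simultaneously dominated by $\tilde{B}'$. The only step that demands genuine care is passing from the hypothesis on $\tilde{B}(q)$ versus $\tilde{B}(1-q)$ to Assumption \ref{assB}, and this relies on the deterministic monotonicity $\tilde{B}(p)\geq \tilde{B}(q)$ sketched above. The condition $\mathbb{T}\subseteq[0,1/2)$ does not enter my core argument; I suspect it is stated either for the closed-testing interpretation of $\overline{B}$ discussed in the surrounding text, or to keep the strong symmetry hypothesis from being vacuous when $t$ and $1-t$ would otherwise collide.
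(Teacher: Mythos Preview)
Your proof is correct, and it is genuinely different from the paper's. The paper does not argue directly from Theorem~\ref{thmenvelope}. Instead, it builds a closed testing procedure in the sense of Goeman--Solari: for each $I\in\mathcal{M}$ it defines a local test $\phi^{loc}_I=\mathbbm{1}\big(\exists t\in\mathbb{T}:R_I(t)>\tilde{B}'_I(t)\big)$ using an $I$-indexed envelope $\tilde{B}'_I$, checks that the local test at $\N$ has level $0.5$ (this is where the hypothesis $\mathbb{P}\{\tilde{B}(q)\geq \tilde{B}(1-q)\}\geq 0.5$ enters), and then invokes the general closed testing bound $\max\{|A|:A\subseteq I,\ \phi^{ct}_A=0\}$. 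The condition $\mathbb{T}\subseteq[0,1/2)$ is used precisely to simplify $\phi^{ct}_A$: adding all indices with $p_i>1/2$ to $A$ is the worst superset, and for $t<1/2$ this addition leaves $R_A(t)$ unchanged while forcing the associated envelope to become the full $\tilde{B}'$, so the closed test collapses to the feasibility condition appearing in~\eqref{formctbound}.

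Your route is shorter and, as you noticed, does not use $\mathbb{T}\subseteq[0,1/2)$ at all: once $V\leq \tilde{B}'$ holds on $E$, the true-null slice $\N\cap I$ is automatically feasible. What the paper's detour buys is the explicit identification of $\overline{B}(I)$ with a closed testing bound; this is not cosmetic, because Theorem~\ref{thmequivct} immediately afterwards uses the closed testing representation (via the admissibility results of Goeman et al.) to conclude that $\tilde{B}'$ is admissible. Your argument establishes the coverage inequality cleanly but does not by itself supply that structural link.
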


 If the \emph{local tests}  discussed in the proof of Theorem \ref{thmctmethod} are admissible, then the method of Theorem \ref{thmctmethod} is admissible, in the sense of  Theorem 3 in \citet{goeman2021only}. 
 The local tests will usually be admissible when $\mathbb{B}$ is any reasonable family, for example the family considered in Section \ref{secspecific}. 
By Theorem \ref{thmequivct} below, if the procedure of Theorem \ref{thmctmethod} is admissible, then the envelope $\tilde{B}'(t)$ from Theorem \ref{thmenvelope} is also admissible.
 


\begin{theorem} \label{thmequivct}
For every $t\in\mathbb{T}$, the bound $\tilde{B}'(t)$ from Theorem \ref{thmenvelope} is equal to the  bound $\overline{B}(\R(t))$ from Theorem \ref{thmctmethod}.
Moreover, if the procedure from Theorem \ref{thmctmethod} that provides bounds for all $I\in \mathcal{M}$ is admissible, then the envelope $\tilde{B}'$ is also admissible. Here admissibility of  $\tilde{B}'$ means that there exists no envelope $B:\mathbb{T}\rightarrow \mathbb{N}$ such that $B(t)\leq \tilde{B}'(t)$ for all $t\in\mathbb{T}$ and such that $\mathbb{P}\{\exists t\in\mathbb{T}:B(t)< \tilde{B}'(t)\}>0.$
\end{theorem}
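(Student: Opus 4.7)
The plan is to handle the two assertions separately: first prove the pointwise equality $\tilde{B}'(t)=\overline{B}(\R(t))$, and then exploit it together with the hypothesized admissibility of the closed-testing procedure of Theorem~\ref{thmctmethod} to rule out any envelope strictly improving $\tilde{B}'$.

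For the equality I would establish both inequalities. The direction $\overline{B}(\R(t))\leq\tilde{B}'(t)$ is immediate: any feasible $A$ in the definition of $\overline{B}(\R(t))$ has $R_A(t)=|A|$ because $A\subseteq\R(t)$, and the constraint at $s=t\in\mathbb{T}$ forces $|A|\leq\tilde{B}'(t)$. For the reverse I would construct a witness of size $\tilde{B}'(t)$. Let $M(t):=R(t)-\tilde{B}'(t)=\max_{l\in\mathbb{T},\,l\leq t}[R(l)-\tilde{B}(l)]^+$ and let $A^*\subseteq\R(t)$ collect the indices of the $\tilde{B}'(t)$ largest p-values in $\R(t)$, so that $\R(t)\setminus A^*$ consists of the $M(t)$ globally smallest p-values. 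A direct count then gives $R_{A^*}(s)=[R(s)-M(t)]^+$ for $s\leq t$, which is $\leq [R(s)-M(s)]^+=\tilde{B}'(s)$ since $M$ is non-decreasing in $t$, and $R_{A^*}(s)=|A^*|=\tilde{B}'(t)\leq\tilde{B}'(s)$ for $s>t$ by monotonicity of $\tilde{B}'$ itself.

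For admissibility I would argue by contradiction. Suppose there exists an envelope $B$ with $B\leq\tilde{B}'$ pointwise and $\mathbb{P}\{B(t^*)<\tilde{B}'(t^*)\}>0$ at some $t^*\in\mathbb{T}$. Replacing $B$ by $B^{\mathrm{nd}}(t):=\min_{s\geq t,\,s\in\mathbb{T}}B(s)$ if necessary, I may assume $B$ is non-decreasing (validity is preserved because $V$ is non-decreasing). Define
\[\overline{B}^{\mathrm{new}}(I):=\max\bigl\{|A|:\emptyset\neq A\subseteq I,\ \forall s\in\mathbb{T}: R_A(s)\leq B(s)\bigr\}.\]
Three observations then finish the proof: (i) $\overline{B}^{\mathrm{new}}$ is a valid simultaneous $50\%$ bound on $|\N\cap I|$, since on the event $\{V(s)\leq B(s)\ \forall s\in\mathbb{T}\}$ (probability $\geq 0.5$) the choice $A=\N\cap I$ is feasible for every $I\in\mathcal{M}$, giving $\overline{B}^{\mathrm{new}}(I)\geq|\N\cap I|$; (ii) $\overline{B}^{\mathrm{new}}(I)\leq\overline{B}(I)$ for all $I$, since $B\leq\tilde{B}'$ only tightens the constraints; and (iii) taking $s=t^*$ in the defining constraint yields $\overline{B}^{\mathrm{new}}(\R(t^*))\leq B(t^*)<\tilde{B}'(t^*)=\overline{B}(\R(t^*))$ with positive probability, using Part~I for the last equality. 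Together (i)--(iii) contradict admissibility of $\overline{B}$.

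The main technical obstacle is the monotonicity of $\tilde{B}'$ invoked for $s>t$ in Part~I: although $R$ and $M$ are both non-decreasing, their difference need not be in general. Using the identity $\tilde{B}'(t)=\min\bigl(R(t),\,\min_{l\leq t,\,l\in\mathbb{T}}[R(t)-R(l)+\tilde{B}(l)]\bigr)$, for $t_1<t_2$ one bounds every candidate term defining $\tilde{B}'(t_2)$ below by $\tilde{B}'(t_1)$: for $l\leq t_1$ one writes $R(t_2)-R(l)+\tilde{B}(l)=[R(t_2)-R(t_1)]+[R(t_1)-R(l)+\tilde{B}(l)]\geq\tilde{B}'(t_1)$, while for $l\in(t_1,t_2]$ it reduces to $\tilde{B}(l)\geq\tilde{B}(t_1)\geq\tilde{B}'(t_1)$, which requires $\tilde{B}$ to be non-decreasing. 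This monotonicity is inherent to the concrete family $\mathbb{B}$ of Section~\ref{secspecific}; otherwise one would first monotonize $\tilde{B}$ (which remains valid because $V$ is non-decreasing) before invoking the witness construction.
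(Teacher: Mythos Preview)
Your proof is correct and follows essentially the same strategy as the paper: for the equality you use the same witness set (the indices of the $\tilde{B}'(t)$ largest $p$-values in $\R(t)$), and for admissibility you construct the same improved bounding procedure $\overline{B}^{\mathrm{new}}$ from the hypothetical improving envelope $B$ and derive the contradiction via $\overline{B}^{\mathrm{new}}(\R(t^*))\leq B(t^*)<\tilde{B}'(t^*)=\overline{B}(\R(t^*))$. If anything, you are more careful than the paper in isolating the monotonicity of $\tilde{B}'$ needed to handle the constraints at $s>t$; the paper's argument restricts attention to $\mathbb{T}\cap[r_j,s]$ and tacitly assumes this monotonicity, whereas you make it explicit and note it holds for the default family of Section~\ref{secspecific} (your monotonization of $B$ in Part~II, on the other hand, is harmless but not actually needed for (i)--(iii)).
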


The admissibility property of our method contrasts it with BH. The latter method is not admissible, since it is uniformly improved by the method in \citet{solari2017minimally}, of which it is not known whether it is admissible.
In the rest of this paper we will only focus on bounds for rejected sets of the form $\R(t)=\{1\leq i \leq m: \text{ } p_i\leq t\}$, as constructed in Theorem \ref{thmenvelope}.

\subsection{A default mFDP envelope} \label{secspecific}

The envelope $\tilde{B}$ depends on a general family $\mathbb{B}$ of candidate confidence bounds. The choice of this family can have a large influence on the bounds obtained  \citep[cf.][]{hemerik2019permutation}. An important question is thus how to choose this set $\mathbb{B}$ in a suitable way. Typically we want $\mathbb{B}$ to contain at least one function $B$ that is a tight  upper envelope of the function $t\mapsto \overline{V}(t)$. Note that between $t=0$ and, say, $t=0.5$, the function $\overline{V}(t)$ tends to be roughly linear in $t$, at least under  independence. Thus, it can make sense to also take the candidate envelopes $B\in \mathbb{B}$ to be roughly linear. Also, giving them a small positive intercept  will often be useful to avoid that $\tilde{B}$ is too sensitive to \emph{p}-values near 1.

Further, it is usually suitable to take $\mathbb{T}=[s_1,s_2]$, where $s_1\geq 0$ is the smallest threshold of interest and $s_2<1$ is the largest threshold of interest. 
Based on these considerations, we propose to use the following default family $\mathbb{B}$ of candidate functions:
\begin{equation} \label{examplefamily}
\mathbb{B}=\{B^{\kappa}: \kappa\in (0,\infty]\},
\end{equation}
with 
$$ B^{\kappa}(t)= |\{1\leq i \leq m: i\kappa - c \leq t\}|= \Big\lfloor \frac{t+c}{\kappa}\Big\rfloor.$$
Here, $c\geq 0$ is a pre-specified small constant. The discrete function $B^{\kappa}$ is roughly linear in $t$ and has slope $1/\kappa$. 

The choice of $c$  influences the slope and intercept of $B^{\kappa}$ and hence of the resulting envelope $\tilde{B}$. Taking $c$ to be $0$ or very small tends to lead to tighther bounds $\tilde{B}(t)$ for very small $t$, while taking $c$ a bit larger tends to lead to tighter bounds for larger $t$. We found in simulations that taking $c=1/(2m)$ usually gave good overall power.

If  we take  $\mathbb{B}$ as in expression \eqref{examplefamily}, then the confidence envelope becomes
\begin{equation} \label{Bkmax}
\tilde{B}=B^{\kappa_{\text{max}}} \text{, where } \quad
\kappa_{\text{max}}=\max\Big\{\kappa\in (0,\infty]: \bigcap_{t\in \mathbb{T}} \big\{ B^{\kappa}(t) \geq  \overline{V}(t) \big\} \Big\}.
\end{equation}
For computer programming this method, a useful equivalent formulation is the following, if $\mathbb{T}$ is an interval.

\begin{proposition} \label{propcomputekappa}
Suppose $\mathbb{T}$ is of the form $[s_1,s_2]$, with $0\leq s_1<s_2\leq 1$.
We then have  
\begin{equation} \label{reformulation_kappamin}
\kappa_{\text{max}}=\kappa_0\wedge\min\Big\{\kappa_i: 1\leq i\leq m \text{ and } 1-p_i\in  \mathbb{T} \Big\},
\end{equation}
where 
$$ \kappa_0 = \frac{s_1+c}{\overline{V}(s_1)}= \frac{s_1+c}{|\{1\leq j \leq m:\text{ } p_j\geq 1-s_1\}|}$$
and for $1\leq i \leq m$
$$ \kappa_i = \frac{1-p_i+c}{\overline{V}(1-p_i)} = \frac{1-p_i+c}{|\{1\leq j \leq m:\text{ } p_j\geq p_i\}|}. $$
(If the denominator is 0, the expression is interpreted as $\infty$.)
\end{proposition}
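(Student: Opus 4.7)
The plan is to rewrite the constraint $B^{\kappa}(t)\ge \overline{V}'(t)$ in a form that isolates $\kappa$, and then pin down where the resulting infimum over $t\in\mathbb{T}$ is attained. Since $\overline{V}'(t)$ is a nonnegative integer and $B^{\kappa}(t)=\lfloor(t+c)/\kappa\rfloor$, the inequality $\lfloor(t+c)/\kappa\rfloor\ge \overline{V}'(t)$ is equivalent to $(t+c)/\kappa\ge \overline{V}'(t)$, i.e.\ to $\kappa\le (t+c)/\overline{V}'(t)$ when $\overline{V}'(t)>0$, and is automatically satisfied when $\overline{V}'(t)=0$. Using the $\infty$-convention from the statement, this gives
\[
\kappa_{\text{max}} \;=\; \inf_{t\in\mathbb{T}} \frac{t+c}{\overline{V}'(t)},
\]
so it suffices to show that this infimum equals the finite minimum on the right-hand side of \eqref{reformulation_kappamin}.

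Next I would exploit the structure of $t\mapsto \overline{V}'(t)=|\{j:p_j\ge 1-t\}|$ as a nondecreasing, right-continuous, integer-valued step function whose jumps occur precisely at the points $t=1-p_i$, $i=1,\ldots,m$. Consequently $\mathbb{T}=[s_1,s_2]$ decomposes into finitely many maximal intervals on which $\overline{V}'$ is constant. On any such interval where $\overline{V}'\equiv k>0$, the map $t\mapsto (t+c)/k$ is strictly increasing in $t$, so its infimum on that interval is attained at the left endpoint. The left endpoints lying in $\mathbb{T}$ are exactly $s_1$, which produces $\kappa_0$, together with the jump points $1-p_i\in\mathbb{T}$. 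By right-continuity, the value of $\overline{V}'$ at $t=1-p_i$ is $|\{j:p_j\ge p_i\}|$ (the index $i$ itself is included because $p_i\ge p_i$), so evaluation there produces precisely $\kappa_i$.

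Taking the minimum over these finitely many candidates yields formula \eqref{reformulation_kappamin}; any initial stretch on which $\overline{V}'$ vanishes contributes $\infty$ under the stated convention and therefore plays no role. The only step requiring some care, and the mild obstacle in the argument, is tracking the right-continuity correctly so that at a jump point $t=1-p_i$ the index $i$ itself is included in the count (giving $|\{j:p_j\ge p_i\}|$ rather than $|\{j:p_j> p_i\}|$); once that bookkeeping is settled the remainder is a routine piecewise-linear optimization over finitely many breakpoints.
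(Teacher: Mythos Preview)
Your argument is correct and uses the same two ingredients as the paper's proof: (i) for integer $n$ the inequality $\lfloor x\rfloor\ge n$ is equivalent to $x\ge n$, which removes the floor; and (ii) the right-continuous step structure of $\overline{V}'$ reduces the constraint over $\mathbb{T}$ to the finite set $\{s_1\}\cup\{1-p_i:1-p_i\in\mathbb{T}\}$. The only difference is the order: the paper first reduces to finitely many points and then solves for $\kappa$ at each, whereas you first eliminate the floor to obtain $\kappa_{\max}=\inf_{t\in\mathbb{T}}(t+c)/\overline{V}'(t)$ and then locate the infimum via piecewise monotonicity; the two routes are equivalent.
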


Note that we can sometimes straightforwardly  improve the envelope $B^{\kappa_{\text{max}}}$ by using the second part of Theorem \ref{thmenvelope}. In Example \ref{rex2}, we continue the running example and compute simultaneous mFDP bounds. Figure \ref{fig:bounds} shows the confidence envelope and Figure \ref{fig:constr} illustrates how the envelope was determined.

\begin{example}[Running example, part 2: Confidence envelopes.] \label{rex2}
We continue on Example \ref{rex1} by computing confidence envelopes, i.e., simultaneous $50\%$-confidence upper bounds for $V(t)$, the number of false positives, which depends on the threshold $t$. We took $\mathbb{T}=[0,0.2]$ and defined $\tilde{B}$ as in  \eqref{Bkmax}. We computed $\tilde{B}$ for both $c=0$ and $c=2/m=0.004$. These choices for $c$ were somewhat arbitrary. 
The number of rejections $R(t)$, as well as the bounds $\tilde{B}(t)$ for both values of $c$, are plotted in Figure \ref{fig:bounds}.
The construction of the confidence envelopes $\tilde{B}$ is illustrated in Figure \ref{fig:constr}.

The figure shows that as expected, near $t=0$, the number of rejections increases quickly with $t$. The reason is that there were many p-values near 0, as seen in Figure \ref{fig:Storey}. By definition \eqref{Bkmax}, the bounds $\tilde{B}(t)$ are roughly linear in $t$ and we see this in the figures. We also see that for this specific dataset, the bound $\tilde{B}$ depends strongly on $c$: for $c=0.004$, the bound $\tilde{B}(t)$ is lower than for $c=0$ if $t$ is close to $0$, but much higher otherwise. For most values of $t\in[0,1]$ the envelope for $c=0.004$ is better, i.e. lower, than the envelope for $c=0$.  On the other hand,  the smallest cutoffs are often most relevant.
Finally, we remark that the bounds in the figures can be somewhat improved using the last part of Theorem  \ref{thmenvelope}.
This improvement was used to obtain Figure \ref{fig:FDPbounds}, where simultaneous $50\%$ confidence bounds for $FDP(t)$ are shown.
\end{example}

\begin{figure}[ht]   
\centering
  \includegraphics[width=0.8\linewidth]{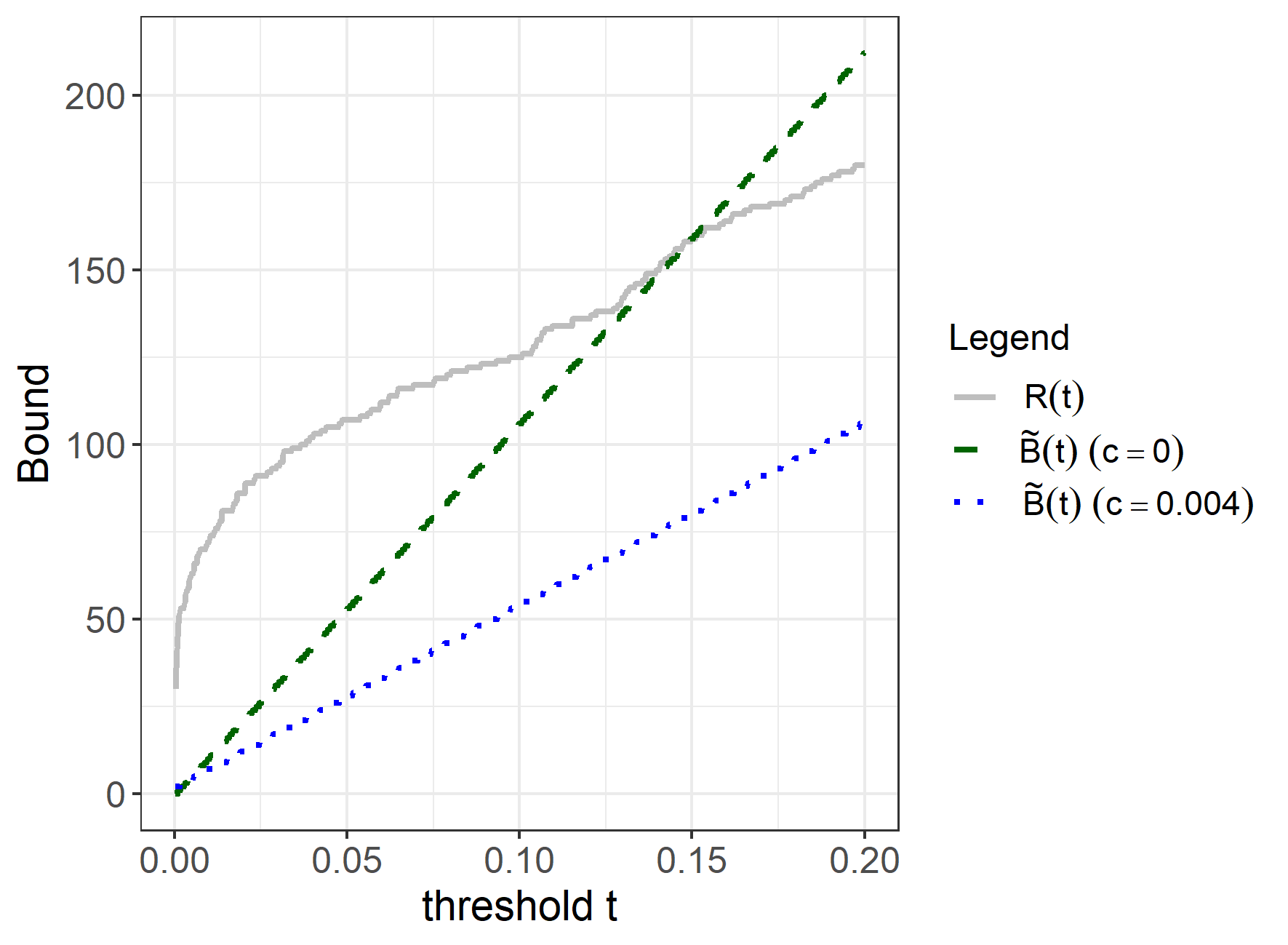}
  \caption{Graph showing the number of rejections and two confidence envelopes for the running example. The solid line shows the number of rejections, which depends on the rejection threshold $t$. The other lines are two confidence envelopes $\tilde{B}$. These are simultaneous $50\%$-confidence upper bounds for the number of false positives $V(t)$. The intercept and slope of $\tilde{B}$ depend on the user-specified constant $c$. Note that for $c=0$, the intercept is slightly smaller than for $c=0.004$. Indeed, the intercepts are 0 and 2 respectively.}    
  \label{fig:bounds}
\end{figure}

\begin{figure}[ht]  
\centering
  \includegraphics[width=0.8\linewidth]{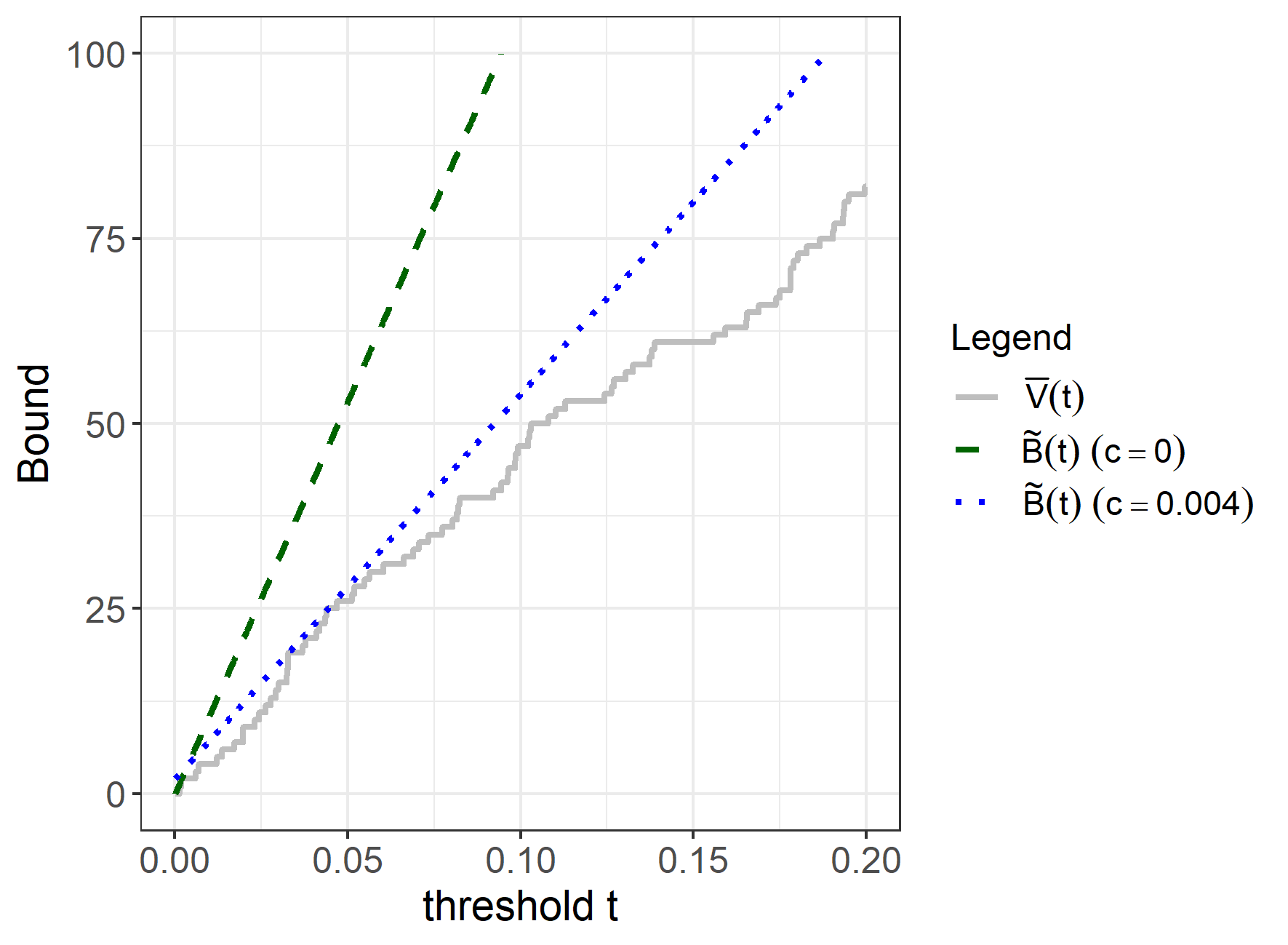}
  \caption{Illustration of the construction of the confidence envelope for the running example. For every rejection threshold $t$,  $\overline{V}(t)$ is a $50\%$ confidence upper bound for the number of false positives, $V(t)$.
  The confidence envelope $\tilde{B}(t)$ is constructed in such a way that it lies above the pointwise bound $\overline{V}(t)$ for all $t\in\mathbb{T}$. Due to this construction, the bounds $\tilde{B}(t)$ are simultaneous $50\%$-confidence bounds for $V(t)$. The intercept and slope of $\tilde{B}$ are influenced by the choice of $c$.}      
  \label{fig:constr}
\end{figure}

\begin{figure}[ht]  
\centering
  \includegraphics[width=0.8\linewidth]{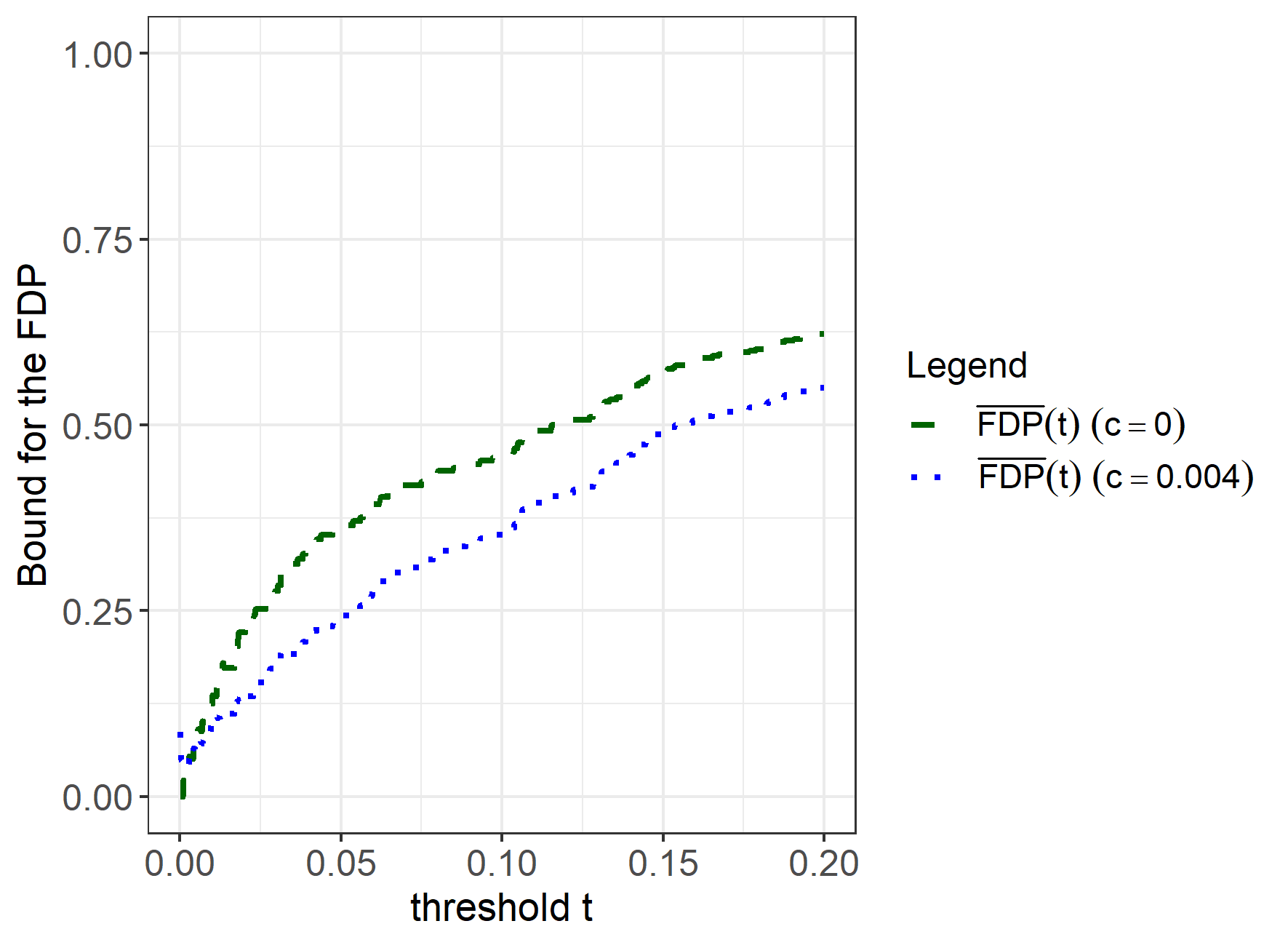}
  \caption{For two values of $c$, simultaneous $50\%$ confidence upper bounds for $FDP(t)$ are shown. Here $\overline{FDP}(t):=\tilde{B}(t)/R(t)$. Note that if $c=0.004$, the bound is larger than zero at $t=0$. The reason is that $\tilde{B}(0)>0$ for this value of $c$. Roughly speaking, the bound $\overline{FDP}(t)$ then decreases for a while, before it starts to increase. Note that if $c=0$, then the bound starts at zero and increases from there.}
  \label{fig:FDPbounds}
\end{figure}

\subsection{Controlling the median of the FDP} \label{seccontr}
Consider $\gamma\in [0,1]$.
As discussed in section \ref{secoverv}, we can use any 
confidence envelope $B$  to guarantee  that $\mathbb{P}(FDP\leq \gamma)\geq 0.5.$ 
In other words, we can control the mFDP. Note that by $mFDP$ we mean the median of the distribution that the FDP has, conditional on the data and conditional on $\gamma$, which can be chosen after seeing the data. 
This is stated in the following Theorem. (The maximum of an empty set is taken to be 0.)  

\begin{theorem} \label{fdpcontrol}
Let $B:\mathbb{T}\rightarrow \mathbb{N}$ be a confidence envelope, for example $\tilde{B}$.
Let the target FDP $\gamma\in[0,1]$ be  freely chosen based on the data.
Define 
$$t_{\text{max}}=t_{\text{max}}(B,\gamma)=\max\{p_i: \text{ there is a } t\in \mathbb{T}\cap[p_i,1]: B(t)/R(t) \leq \gamma\}.$$ %
Reject all hypotheses with p-values at most $t_{\text{max}}$ and denote the FDP by $FDP_\gamma$. Then with probability 0.5 the FDP is at most $\gamma$, i.e.,
\begin{equation} \label{FDPcontrol}
\mathbb{P}\big\{FDP_\gamma\leq \gamma)\big\}\geq 0.5.
\end{equation}
In fact we have
\begin{equation} \label{eq:uniformovergamma}
 \mathbb{P}( \bigcap_{\gamma\in[0,1]} FDP_{\gamma}\leq\gamma )\geq 0.5,
\end{equation}
i.e., the procedure offers mFDP control simultaneously over all $\gamma\in[0,1]$.

\end{theorem}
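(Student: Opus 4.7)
The plan is to reduce everything to the single event
\[
E = \bigcap_{t \in \mathbb{T}} \{V(t) \le B(t)\},
\]
which has probability at least $0.5$ because $B$ is a confidence envelope. I will argue that on $E$, the inequality $FDP_\gamma \le \gamma$ holds for \emph{every} $\gamma \in [0,1]$ at once. Since $E$ does not depend on $\gamma$, this automatically yields both \eqref{FDPcontrol} and the simultaneous statement \eqref{eq:uniformovergamma}; in particular, the fact that $\gamma$ may be selected after seeing the data then comes for free, because the containment $E \subseteq \bigcap_\gamma\{FDP_\gamma\leq\gamma\}$ holds outcome-wise.

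So fix an outcome in $E$ and fix $\gamma \in [0,1]$. If the defining set of $t_{\text{max}}$ is empty, then by convention $t_{\text{max}} = 0$; since all p-values are strictly positive, $R(t_{\text{max}})=0$ and by convention $FDP_\gamma = 0 \le \gamma$. Otherwise, by the definition of $t_{\text{max}}$, there exists a witness $t^* \in \mathbb{T}$ with $t^* \ge t_{\text{max}}$ and $B(t^*)/R(t^*) \le \gamma$.

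The key technical step will be to show that $R(t^*) = R(t_{\text{max}})$, via a maximality argument. Suppose some $p_j$ lies in $(t_{\text{max}}, t^*]$. Then $t^* \in \mathbb{T} \cap [p_j,1]$ and $B(t^*)/R(t^*) \le \gamma$, so $p_j$ itself satisfies the defining condition of $t_{\text{max}}$, contradicting $p_j > t_{\text{max}}$. Hence no p-value lies in $(t_{\text{max}}, t^*]$, and therefore $R(t^*) = R(t_{\text{max}})$. Combining this equality with the monotonicity $V(t_{\text{max}}) \le V(t^*)$ and the envelope event $V(t^*) \le B(t^*)$ gives
\[
FDP_\gamma \;=\; \frac{V(t_{\text{max}})}{R(t_{\text{max}})} \;\le\; \frac{V(t^*)}{R(t^*)} \;\le\; \frac{B(t^*)}{R(t^*)} \;\le\; \gamma.
\]

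Since the outcome in $E$ was arbitrary and the argument is valid for every $\gamma$ simultaneously, we obtain $\mathbb{P}\bigl(\bigcap_{\gamma \in [0,1]} \{FDP_\gamma \le \gamma\}\bigr) \ge \mathbb{P}(E) \ge 0.5$, which implies both \eqref{FDPcontrol} and \eqref{eq:uniformovergamma}. I expect the main (indeed only nontrivial) obstacle to be the maximality argument establishing $R(t^*) = R(t_{\text{max}})$; everything else is a direct consequence of the envelope property and the monotonicity in $t$ of $V(\cdot)$ and $R(\cdot)$.
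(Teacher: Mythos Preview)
Your proof is correct and follows the same overall strategy as the paper: reduce to the envelope event $E$ and verify $FDP_\gamma\le\gamma$ on $E$ for every $\gamma$. In fact you are more careful than the paper's own argument, which writes $FDP_\gamma\le B(t_{\text{max}})/R(t_{\text{max}})\le\gamma$ directly without justifying why the witness $t^*$ can be replaced by $t_{\text{max}}$; your maximality step showing $R(t^*)=R(t_{\text{max}})$ is precisely the missing detail.
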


In other words, if we reject all hypotheses with \emph{p}-values that are at most $t_{\text{max}}$, then a median unbiased estimate of the FDP is $\gamma$. This follows directly from the fact that the estimates $\overline{FDP}(t)$, $t\in \mathbb{T}$, are simultaneously valid $50\%$-confidence upper bounds, by inequality \eqref{envelope}. Inequality \eqref{FDPcontrol}  holds despite the fact that $\gamma$ can depend on the data. In fact, with probability at least $50\%$, $FDP_{\gamma}\leq\gamma$ simultaneosly over all $\gamma\in[0,1]$.
 This contrasts our method with many other procedures, which require considering only one rejection criterion, which moreover needs to be chosen in advance \citep{benjamini1995controlling, van2004augmentation,lehmann2005generalizations, romano2007control,guo2007generalized,roquain2011type, neuvial2008asymptotic, guo2014further,delattre2015new, ditzhaus2019variability, dohler2020controlling, basu2021empirical, mie2022}. In Example \ref{rex3} we continue the running example and apply our mFDP control method.

\begin{example}[Running example, part 3: Controlling the mFDP.] \label{rex3}
We continue on Example \ref{rex2}. Take $\gamma=0.05$ and consider the confidence envelope $\tilde{B}$ discussed in Example \ref{rex2}. To find a rejection threshold  $t_{\text{max}}$ for which we can ensure $mFDP\leq\gamma$, we use Theorem \ref{fdpcontrol}. It computes $t_{\text{max}}$ as the largest $t$ for which the estimate in Figure \ref{fig:FDPbounds} is at most $\gamma$.

Recall that in Example \ref{rex2}, we computed bounds $\tilde{B}(t)$ for both  $c=0$ and $c=0.004$.
For $c=0$, we now find $t_{\text{max}}=0.002709$, which is the 54-th smallest \emph{p}-value. Thus, we can reject 54 hypotheses. 
More precisely, if we reject the 54 smallest \emph{p}-values, we know that the mFDP is below $\gamma=0.05$.
Note that $t_{\text{max}}$ is about 27 times higher than the Bonferroni threshold $0.05/500=0.0001$.

If $c=0.004$ then $t_{\text{max}}=0.001660$, so that we can only reject 53 hypotheses. The reason why $t_{\text{max}}$ is lower if $c=0.004$, it that  for small values of $t$, the bound $\tilde{B}(t)$ is higher for $c=0.004$ than for $c=0$. We saw this in Figure \ref{fig:bounds}.

Note that it is allowed to change $\gamma$ after looking at the data. For instance, if we decrease $\gamma$ to 0.01, we reject 44 hypotheses if $c=0$ and we reject no hypotheses for $c=0.004$.

\end{example}

\subsection{Adjusted \emph{p}-values for mFDP control} \label{secadjp}
Adjusted \emph{p}-values can be a useful tool in multiple testing. They are defined as the smallest level, e.g. the smallest $\gamma$, at which the multiple testing procedure would reject the hypothesis.
Adjusted \emph{p}-values can be problematic in the context of e.g. FDR control and ours. The reason is that the adjusted \emph{p}-value does not have an independent meaning and can easily be misinterpreted when taken out of context \citep[][\S5.4]{goeman2014multiple}.
Moreover, an mFDP-adjusted \emph{p}-value could be $0$, which also shows that the interpretation is very different than for real \emph{p}-values, which cannot be $0$.
Nevertheless, in our context, adjusted \emph{p}-values are quite useful, because, once computed, they allow checking quickly which hypotheses are rejected for various $\gamma$.

Let $B$ be a confidence envelope and $1\leq i \leq m$.  As discussed in Section \ref{seccontr}, $B$ defines an mFDP controlling procedure.   The mFDP adjusted \emph{p}-value for $H_i$ is the largest $\gamma\in[0,1]$ for which $H_i$ is still rejected by the mFDP controlling procedure. 
Consequently, if we reject all hypotheses $H_i$ with $p_i^{\text{ad}}\leq \gamma$, then $mFDP\leq \gamma$.

\begin{proposition} \label{propadaptedp}
Let $1\leq i \leq m$.
Then the value 
\begin{equation} \label{defpad}
p_i^{\text{ad}}:=\min\{B(t)/R(t) : t\in\mathbb{T}\cap{[p_i,1]}  \},
\end{equation}
is an  mFDP-adjusted \emph{p}-value for $H_i$, i.e., if we reject all hypotheses $H_i$ with $ p_i^{\text{ad}}\leq \gamma $,  then $\mathbb{P}(FDP_\gamma \leq \gamma)\geq 0.5$. Here $\gamma$ may be  chosen based on the data. In fact, inequality \eqref{eq:uniformovergamma} holds.
We take the minimum of an empty set to be  $\infty$.

\end{proposition}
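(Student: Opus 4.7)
The plan is to reduce Proposition \ref{propadaptedp} directly to Theorem \ref{fdpcontrol}, by showing that for every value of $\gamma$ (possibly chosen from the data) the rejection set defined by the adjusted \emph{p}-value rule coincides, sample-path by sample-path, with the rejection set $\{i:p_i\leq t_{\text{max}}(B,\gamma)\}$ produced by the procedure of Theorem \ref{fdpcontrol}. Once this identity of rejection sets is established, the $FDP$ of the two procedures is the same, and both the single-$\gamma$ inequality and the simultaneous-over-$\gamma$ inequality \eqref{eq:uniformovergamma} transfer immediately.

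Fix $\gamma\in[0,1]$ and $i\in\{1,\ldots,m\}$. I would prove the equivalence $p_i^{\text{ad}}\leq \gamma \iff p_i\leq t_{\text{max}}(B,\gamma)$ in two steps. For the forward direction, if $p_i^{\text{ad}}\leq\gamma$ then, by the definition \eqref{defpad}, the set $\mathbb{T}\cap[p_i,1]$ is nonempty and there exists some $t^\star\in\mathbb{T}\cap[p_i,1]$ with $B(t^\star)/R(t^\star)\leq\gamma$. Hence $p_i$ itself belongs to the set $\{p_j:\exists t\in\mathbb{T}\cap[p_j,1], B(t)/R(t)\leq\gamma\}$ whose maximum defines $t_{\text{max}}$, giving $p_i\leq t_{\text{max}}$. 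For the reverse direction, if $p_i\leq t_{\text{max}}$, then either $t_{\text{max}}=0$ (in which case $p_i=0$, a trivial edge case), or else there exists some index $j$ with $p_j=t_{\text{max}}$ and some $t\in\mathbb{T}\cap[p_j,1]$ with $B(t)/R(t)\leq\gamma$. Since $p_i\leq p_j\leq t$, this $t$ also lies in $\mathbb{T}\cap[p_i,1]$, so $p_i^{\text{ad}}\leq B(t)/R(t)\leq\gamma$ by \eqref{defpad}.

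With the pathwise identity of rejection sets in hand, the random variable $FDP_\gamma$ coincides with the FDP of the procedure of Theorem \ref{fdpcontrol} for every $\gamma$ on every sample path. Consequently,
\[
\bigcap_{\gamma\in[0,1]}\{FDP_\gamma\leq \gamma\} \quad\text{under the adjusted }p\text{-value rule} \;=\; \bigcap_{\gamma\in[0,1]}\{FDP_\gamma\leq \gamma\} \quad\text{under Theorem \ref{fdpcontrol}},
\]
and applying inequality \eqref{eq:uniformovergamma} of Theorem \ref{fdpcontrol} to the right-hand event yields \eqref{eq:uniformovergamma} for the adjusted-$p$-value procedure. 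Specializing to a single $\gamma$ gives the stated mFDP control.

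The main obstacle is almost entirely bookkeeping around the edge-case conventions: the convention $\min\emptyset=\infty$ in the definition \eqref{defpad}, the convention $\max\emptyset=0$ in the definition of $t_{\text{max}}$, the treatment of ties among the $p_j$, and potentially divisions $B(t)/R(t)$ when $R(t)=0$. Each of these has to be handled so that both directions of the rejection-set equivalence remain valid; none is mathematically deep, but care is needed to confirm that the equivalence holds identically on every sample path and for every $\gamma$ simultaneously, which is what enables the uniform-over-$\gamma$ conclusion.
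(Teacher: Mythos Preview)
Your proof is correct, and is close in spirit to the paper's, but takes a somewhat more structured route. The paper does not reduce to Theorem~\ref{fdpcontrol} via a two-sided rejection-set equivalence; instead it argues directly on the envelope event $E=\{\forall t\in\mathbb{T}:FDP(t)\leq B(t)/R(t)\}$. Given $\gamma$, letting $r$ be the number of rejected hypotheses, the paper simply exhibits some $t'\in\mathbb{T}$ with $R(t')=r$ and $B(t')/R(t')\leq\gamma$, and concludes $FDP_\gamma=FDP(t')\leq\gamma$ on $E$. This only needs one direction of your equivalence (essentially your forward direction applied to the hypothesis with the $r$-th smallest $p$-value), together with the observation that any $t'\geq p_{(r)}$ with $B(t')/R(t')\leq\gamma$ automatically has $R(t')=r$, since otherwise more than $r$ hypotheses would have $p_i^{\text{ad}}\leq\gamma$.

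What your approach buys is modularity: once you know the adjusted-$p$-value rule and the $t_{\text{max}}$ rule produce identical rejection sets pathwise, every property proved for one transfers to the other without re-opening the envelope argument. The paper's approach is shorter because it avoids proving the reverse direction, but it re-runs essentially the same envelope argument as in Theorem~\ref{fdpcontrol}. Your concern about the edge cases is well placed but, as you anticipated, none of them bite: since $p$-values lie in $(0,1]$ in this paper, the case $t_{\text{max}}=0$ is vacuous, and the monotonicity of $p_i^{\text{ad}}$ in $p_i$ handles ties.
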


Suppose $\mathbb{T}$, the set of rejection thresholds of interest,  is of the form $[s_1,s_2]$.
Then we have the following useful reformulation of Proposition \ref{propadaptedp}.

\begin{proposition} \label{rewriteadaptedp}
Suppose $\mathbb{T}$ is of the form $[s_1,s_2]$, with $0\leq s_1 < s_2\leq 1$.
For each $1\leq i \leq m$ with $p_i\leq s_2$, the adjusted \emph{p}-value defined above is then

\begin{equation} \label{pad2}
p_i^{\text{ad}}=\min\Big\{B(t)/R(t) :  t\in \big[\max\{s_1,p_i\},s_2\big] \cap\big\{s_1,p_1,p_2,...,p_m\big\}      \Big\}.
\end{equation}
\end{proposition}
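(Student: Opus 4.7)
The plan is to reduce the minimum in \eqref{defpad}, which is taken over the continuous interval $\mathbb{T}\cap[p_i,1]$, to a minimum over a finite set, by exploiting the piecewise-constant structure of $R(t)$ together with the monotonicity of $B$. First, the hypothesis $p_i\leq s_2$ together with $\mathbb{T}=[s_1,s_2]$ immediately yields
$$\mathbb{T}\cap[p_i,1]=[\max\{s_1,p_i\},s_2],$$
and on this interval $R(t)\geq 1$ since $p_i$ itself lies in it, so $B(t)/R(t)$ is well defined and bounded below.

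The key observation is that $R(t)=|\{j:p_j\leq t\}|$ is a non-decreasing, right-continuous step function whose only jumps occur at the distinct values of $p_1,\ldots,p_m$, while $B$ is non-decreasing in $t$. The latter is implicit in the notion of a confidence envelope (a bound on the non-decreasing quantity $V(t)$) and is manifest for the default family \eqref{examplefamily}, where $B^\kappa(t)=\lfloor(t+c)/\kappa\rfloor$.

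Next I would partition $[\max\{s_1,p_i\},s_2]$ into the maximal sub-intervals on which $R$ is constant; the breakpoints are exactly the p-values $p_j$ strictly between $\max\{s_1,p_i\}$ and $s_2$. On each such sub-interval $R$ equals some fixed positive integer $r$, so $t\mapsto B(t)/R(t)=B(t)/r$ is non-decreasing there, hence minimized at the left endpoint of the sub-interval. Consequently the overall minimum over $[\max\{s_1,p_i\},s_2]$ is attained either at the global left endpoint $\max\{s_1,p_i\}$ or at one of the interior jump points $p_j\in(\max\{s_1,p_i\},s_2]$.

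Since $\max\{s_1,p_i\}$ equals $s_1$ when $s_1\geq p_i$ and $p_i$ otherwise, it always lies in $\{s_1,p_1,\ldots,p_m\}$, and the remaining candidate minimizers are by construction p-values in the interval. These candidates are exactly the elements of the set on the right-hand side of \eqref{pad2}, proving the identity. The only point requiring care is the monotonicity of $B$; this is the sole structural property of the envelope used in the argument and needs no assumption beyond what the envelope interpretation already provides.
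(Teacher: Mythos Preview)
Your proof is correct and follows essentially the same route as the paper's: identify $\mathbb{T}\cap[p_i,1]=[\max\{s_1,p_i\},s_2]$, then argue that $t\mapsto B(t)/R(t)$ can only decrease at the $p$-values, so the minimum is attained in $\{s_1,p_1,\ldots,p_m\}\cap[\max\{s_1,p_i\},s_2]$. The paper phrases this as ``the function only has jumps downwards at points $t\in\{p_1,\ldots,p_m\}$,'' whereas you unpack the mechanism by partitioning into constant-$R$ sub-intervals and using that $B$ is non-decreasing; these are the same argument, and your explicit acknowledgement that monotonicity of $B$ is the one structural ingredient needed is a point the paper leaves implicit.
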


Note that given the data, the adjusted \emph{p}-value is non-decreasing function of the unadjusted \emph{p}-value. As a consequence of this and Proposition \ref{rewriteadaptedp}, if $\mathbb{T}$ is of the form $[s_1,s_2]$, we can use Algorithm \ref{a:adjustedp} to efficiently compute the mFDP adjusted \emph{p}-values. 
The algorithm takes the $m$ sorted \emph{p}-values, $p_{(1)},...,p_{(m)}$, as input and returns the corresponding sorted adjusted \emph{p}-values.

The idea of the  algorithm is to start with computing the largest adjusted \emph{p}-value(s), continue with the second largest one and so on. The algorithm also uses the fact that if $p_{(i)}>s_2$, then $p_{(i)}^{\text{ad}}=\infty$. It further uses the fact that all hypotheses with unadjusted \emph{p}-values below $s_1$ have the same adjusted \emph{p}-value. Adjusted \emph{p}-values can be easily computed using the \verb|R| package \verb|mFDP|.

\begin{algorithm}[ht!] 
\caption{Algorithm for computing the mFDP adjusted \emph{p}-values if $\mathbb{T}=[s_1,s_2]$}
\begin{algorithmic}  \label{a:adjustedp}
\STATE $r \gets |\{1\leq i \leq m: \text{ }p_i\leq s_2\}|$.
\IF{$r<m$}
\STATE $p_{(r+1)}^{\text{ad}},...,p_{(m)}^{\text{ad}}\gets \infty$.
\ENDIF
\IF{$r>0$}
\IF{$s_1\leq p_{(r)}$}
\STATE $p_{(r)}^{\text{ad}} \gets B(p_{(r)})/R(p_{(r)})$
\ELSE 
\STATE $p_{(r)}^{\text{ad}} \gets B(s_1)/R(s_1)$
\ENDIF
\STATE $l \gets r-1$
\WHILE{$l>0$ AND $p_{(l)}\geq s_1$}
\STATE  $p_{(l)}^{\text{ad}}=\min\{p_{(l+1)}^{\text{ad}},  B(p_{(l)})/R(p_{(l)}) \}$
\STATE $l \gets l-1$
\ENDWHILE
\IF{$l>0$}
\STATE $p_{(1)}^{\text{ad}},...,p_{(l)}^{\text{ad}} \gets \min\{p_{(l+1)}^{\text{ad}},  B(s_1)/R(s_1)\}$ 
\ENDIF
\ENDIF
\RETURN $p_{(1)}^{\text{ad}},...,p_{(m)}^{\text{ad}}$
\end{algorithmic}
\end{algorithm}  \label{alg1}

\section{Simulations} \label{secsims}

We performed simulations to assess the error control, power and speed of our method, using R version 4.3 \citep{R}. We compared our approach  to three existing methods.
The first one is the method from \citet{goeman2019simultaneous}, which exploits Simes' inequality and closed testing and has proven admissibility, like our method.
That method is a special case of \citet{goeman2011multiple}, but \citet{goeman2019simultaneous} describes a faster algorithm, although its computational complexity is still not linear like that of ours.
 If one takes $\alpha=0.5$ in that method, then it provides flexible mFDP control, just like our method. The second method we compare with is the procedure for simultaneous FDP control from \citet{katsevich2020simultaneous}. Taking  $\alpha=0.5$ in that method gives flexible mFDP control, although the method does assume independence. Moreover, that method only has proven validity for $\alpha\leq0.31$, although the authors note that it is probably also valid for larger $\alpha$.
Finally, we compare our method with
 BH \citep{benjamini1995controlling,benjamini2001control}, which is the most popular method related to FDP control, although it does not control the mFDP but the FDR. Moreover, it requires choosing $\gamma$ (there called $\alpha$) before seeing the data. We also consider two so-called \emph{adaptive} FDR methods, that use an estimate of $\pi_0$ to gain power. 

In the simulations we considered $m=10^3$ or $10^4$ hypotheses. The \emph{p}-values were based on Z statistics, computed from simulated data with various dependence structures. The \emph{p}-values were two-sided, unless stated otherwise.
To add signal a number $\Delta$ was added to the first $(1-\pi_0)/m$ test statistics.
The following dependence structures of the test statistics were considered:
\begin{itemize}
\item independence (IN);
\item homogeneous positive correlations $\rho=0.5$ (HO);
\item five independent blocks, with positive dependence $\rho=0.8$ within blocks (BL);
\item 50 negatively dependent blocks (correlations -0.01),  with correlation 0.5 within blocks. The \emph{p}-values were right-sided, so that they were negatively correlated between blocks (NE).
\end{itemize}
Further, we varied $m$, $\pi_0$, and the signal $\Delta$.

We computed $\tilde{B}$ as in Section \ref{secspecific}.
We took $\mathbb{T}=[0,0.1]$, i.e. our bounds and mFDP-adjusted \emph{p}-values were simultaneously valid with respect to all thresholds $t$ in this interval. 
We took $c=1/(2m)$, as recommended in Section \ref{secspecific}.

We first assessed whether our method provided appropriate simultaneous mFDP control. We show simulation results  in Table \ref{table:error}. 
For each setting, the table shows the estimate of the probability 
$\mathbb{P}\{\text{for some } t\in\mathbb{T}\text{, }V(t)>\tilde{B}(t)\}$, which is identical to the probability that there is a $0<\gamma<1$ for which $FDP_{\gamma}$ exceeds $\gamma$.
Each estimate was based on $10^4$ repeated simulations. 

The table confirms the simultaneous control of our method. We see that the estimated error rate is  about $0.5$ under  independence if $\pi_0=1$. Indeed, the true error rate is then exactly 0.5. The reason is that then $p=q$ and the equality \eqref{eqsymm} then holds, so that the probability in Assumption \ref{assB} is exactly $0.5$. We see that for $\pi_0=0.95$, the error rate is also about $0.5$, rather than less.  This is because our method is rather adaptive, as mentioned in the Introduction. 
In the setting with negative dependence and $\pi_0=1$ and one-sided \emph{p}-values, the error rate is also exactly  0.5, again because  \eqref{eqsymm} then holds.
Note that in the other cases, the method was also valid. 

\begin{table}[!ht] \normalsize  
\caption{The error rate of our procedure, in various settings with $m=10^3$. The last column indicates the simulation-based  estimate of the probability that there is a $0<\gamma<1$ for which $FDP_{\gamma}$ exceeds $\gamma$. This probability should not be larger than $0.5$. For the settings with $\pi_0<1$, the signal for the false hypotheses was $\Delta=3$.} 
\begin{center}
    \begin{tabular}{ l l l l}    
\hline \\[-0.4cm]
$\pi_0$ &Setting & $\rho$  &  $\mathbb{P}(\text{error})$ \\ \hline 
1 & IN &$0$  \quad & .499 \\ 
1 & HO &$.2$  \quad & .334  \\ 
1 & HO&$.5$  \quad & .266  \\ 
1 & HO &$.9$  \quad & .330  \\ 
1 & BL &$.5$  \quad & .335  \\ 
1 & BL &$.9$  \quad & .351  \\ 
1 & NE &$-.01$  \quad & .500   \\ 
.95 & IN &$0$  \quad &  .498 \\ 
.95 & HO &$.2$  \quad & .336  \\ 
.95 & HO&$.5$  \quad & .266 \\ 
.95 & HO &$.9$  \quad & .327 \\ 
.95 & BL &$.5$  \quad &.338 \\ 
.95 & BL &$.9$  \quad & .343  \\ 
.95 & NE &$-.01$  \quad & .501  \\  \hline  
    \end{tabular}
\label{table:error}
\end{center}
\end{table}

Next, we assessed the power of our method by comparing it to that of the mentioned methods from \citet{goeman2019simultaneous} (``CT$+$Simes") and \citet{katsevich2020simultaneous} (``K\&R"). The power was defined as the average fraction of the false hypotheses that was rejected.
For three values of the target FDP $\gamma$ we estimated the power for the three methods.
The results are shown in Figure   \ref{fig:powers}, where $m=10^3$ and $\pi_0=0.9$.  
Note that overall K\&R performed least well among the three methods, especially for $\gamma=0.01$.  This may partly be due to the $+1$ in their formula for the bound for the number of false positives.
Further, for $\gamma=0.01$, CT$+$Simes had better power than our novel method.  However, as shown in the Supplementary material, for $m=10^4$ and $\pi_0=0.9$, our method was better than that method overall. Further, for $m=10^4$ and $\pi_0=0.5$ our method was clearly better than both competitors, as shown in Figure \ref{fig:powerextra}. This can be understood by noting that K\&R is not adaptive, i.e., it is conservative when $\pi_0$ is far from 1.

Further, our method was orders of magnitude faster than CT$+$Simes,  especially for large $m$. For example, in the setting of the first panel of Figure \ref{fig:powerextra}, our method took $1.7\cdot 10^{-2}$ seconds on average, while CT$+$Simes took 4.8 seconds on average. K\&R was the fastest with $8.6\cdot 10^{-4}$ seconds on average. The reason is that the bounds for $V(t)$ that that method provides, only depend on  $m$ and $t$ and not on the  data.

\begin{figure}[ht] 
\centering
  \includegraphics[width=\linewidth]{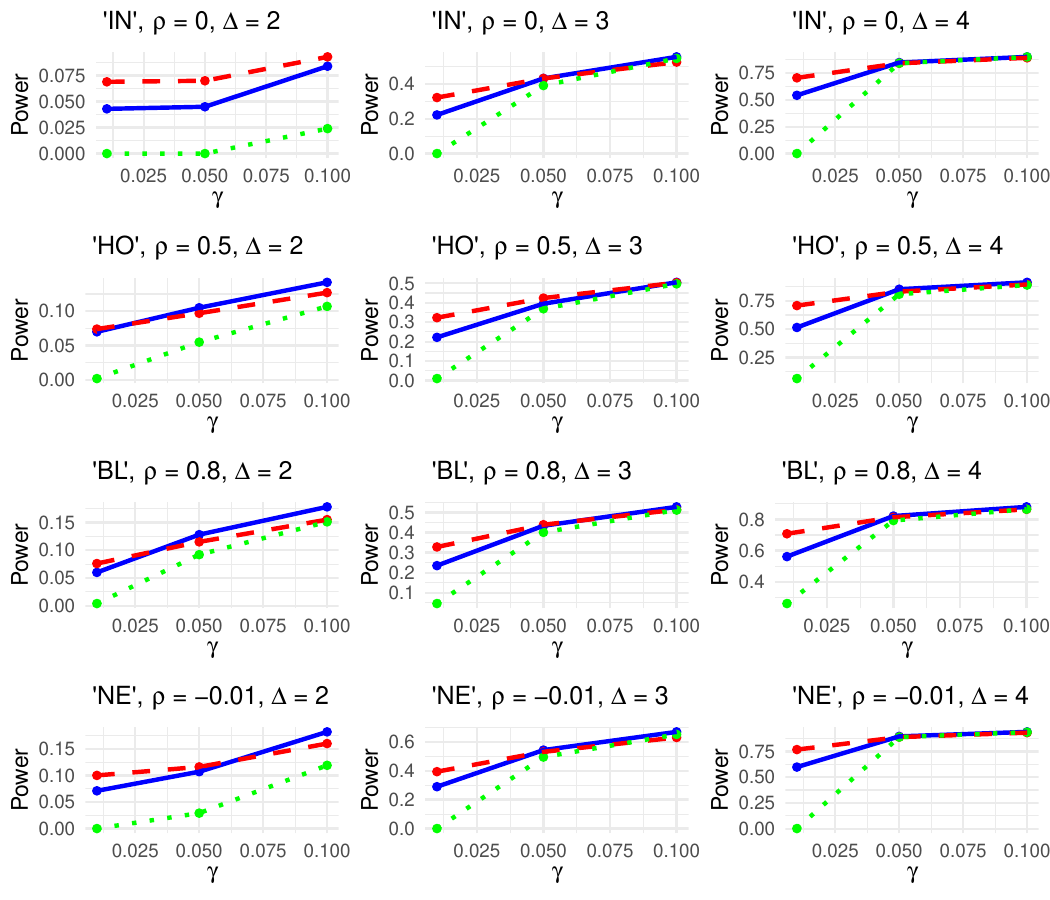}
  \caption{ The power of our novel method (solid lines), CT$+$Simes (dashed lines) and K\&R (dotted lines)  as depending on $\gamma$, for various settings with $m=10^3$ and $\pi_0=0.9$.  Each estimate is based on $10^4$ simulations.} \label{fig:powers}
\end{figure}

\begin{figure}[ht] 
\centering
  \includegraphics[width=\linewidth]{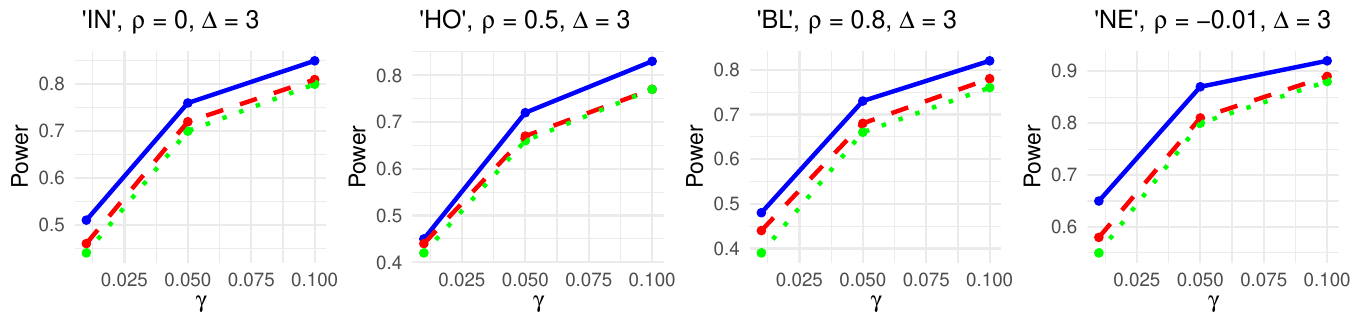}
  \caption{ The power of our novel method (solid lines), CT$+$Simes (dashed lines) and K\&R (dotted lines)  as depending on $\gamma$, for various settings with $m=10^4$ and $\pi_0=0.5$. Each estimate is based on $10^3$ simulations.} \label{fig:powerextra}
\end{figure}

Finally, for the same simulation settings, we computed the power of BH and two adaptive versions of BH. The results are in Table \ref{table:power3}. The first column shows the power of standard BH. The other columns show the power of two versions of the \emph{right-boundary procedure} of \citet{liang2012adaptive}, which makes BH more powerful by using an estimate of $\pi_0$. The first one (BH*) is the their original procedure based on Storey's estimator, $\hat{\pi}_0'$. The second one (BH**) is the same, but based on our novel estimator $\overline{\pi}_0'$.
Since BH and adaptive BH require choosing $\alpha$ beforehand, we only show the power for $\alpha=0.05$, i.e., $\gamma=0.05$.

\begin{table}[!ht] \normalsize  
\caption{ The power of BH and two adaptive versions of BH. The target FDR $\alpha$, i.e., $\gamma$, was $0.05$.  Each estimate in the table is based on $10^4$ simulations. } 
\begin{center}
    \begin{tabular}{ l l l l l l }    
\hline \\[-0.4cm]
  & &  & \multicolumn{3}{l}{\qquad  Method}\\ \cline{4-6} 
 Setting & $\rho$ & $\Delta$  &    BH \quad & BH* \quad & BH**\\ \hline 
  IN& 0       &$2$   & .058 & .062 &.062 \\ 
 IN& 0      &$3$   & .496 & .511 &.512\\ 
  IN& 0       &$4$   & .879 & .886&.886\\ 
   HO& .5     &$2$   & .099 & .125 &.122\\ 
   HO& .5      &$3$   & .466 & .494&.485\\ 
   HO& .5       &$4$  & .861 & .910 &.904\\    
  BL& .8       &$2$   & .129 & .153 &.150\\ 
   BL& .8     &$3$   & .472 & .503 &.499\\ 
   BL& .8      &$4$  & .842 & .863 &.862\\  
    NE& -.01  &$2$      & .120& .130 &.130 \\    
   NE& -.01   &$3$    & .598 & .617 &.617 \\  
   NE& -.01   &$4$     & .919 & .925 &.925 \\  \hline  
       \end{tabular}
\label{table:power3}
\end{center} 
\end{table}

Comparing Figure \ref{fig:powers} and Table \ref{table:power3} shows that for $\gamma=0.05$, the power of our method was roughly equal to that of BH, yet often slightly lower. However, our method provides simultaneous bounds and $\gamma$ can be chosen after seeing the results. As expected, the adaptive BH methods had a bit more power than BH. The adaptive methods performed similarly to each other. We found that they provided valid FDR control in all the settings, except in the settings ``HO", where the FDR of BH* varied around 0.08 and the FDR of BH** varied around 0.07. 


\section{Discussion}
This paper provides an exploratory multiple testing approach, which is useful in particular because the user is allowed to freely  use the data to choose rejection thresholds. 
This is what many researchers would like to do, but is not allowed by most popular methods.
We have provided a result on admissibility of our approach and simulations show good power,  especially in settings with many false hypotheses.  Moreover, the power properties can be influenced by the user, who may select an appropriate family of candidate envelopes $\mathbb{B}$. The choice of the range $\mathbb{T}$ of rejection thresholds also affects power, since the method focuses the power on the thresholds within this range.

Since our method essentially provides estimates for the FDP without confidence intervals, we encourage users to also compute a confidence interval, using e.g. the methods listed in the Introduction. However, as discussed, the methods among those that are valid under dependence  have limited power. This means that the confidence interval for the FDP may contain 1, even when there are several strong signals.
If permuting data is valid, this can often be used to construct tighter confidence intervals    \citep{hemerik2019permutation,blain2022notip,andreella2023permutation}.

Our simulations illustrate that for a given $\gamma$, BH tends to have slightly more power than our method, but our method has the advantage that it provides  post hoc inference. Indeed, we have shown that BH often becomes  too liberal when $\alpha$ is chosen post hoc. On the other hand, we control the median of the FDP, which may not always be as appealing as control of the mean. 
To further illustrate the utility of our method, in the Supplementary Material we provide a data analysis of real RNA-Seq data. Here we further explain that the  flexibility leads to added insights into the data.

Both our method and BH have certain proven, finite-sample, theoretical guarantees, in particular under independence. None of the methods are guaranteed to be valid under an unknown dependence structure. However, there is much evidence that BH is valid for many dependence structures. 
Likewise, we did not find a simulation setting where our method was invalid.

Besides FDP estimators, we have provided a novel $\pi_0$ estimator. We have discussed simulations where this estimator was used within an adaptive BH approach. Future work may more extensively assess our estimator in such settings. 
Further avenues for potential future research become apparent in the Supplementary Material. There we discuss more general estimates of $\pi_0$ and $V(t)$, which can be combined with the approach in Section \ref{secsimbounds} of constructing simultaneous mFDP bounds.

Note that ``uniform'' or `` simultaneous'' control usually means that the probability of a union of events is kept below some value \citep{genovese2004stochastic,meinshausen2006false,blanchard2020post,goeman2021only}. Since ``FDR control'' is not defined as controlling a probability, ``simultaneous FDR control'' is in that sense undefined. However, interestingly, Corollary 1 in \citet{katsevich2018controlling} provides what might be called  ``simultaneous FDR control'', assuming the \emph{p}-values are independent. In particular, there  $\alpha$ can be chosen post hoc, while still guaranteeing that the FDR is at most $\alpha$.

\setlength{\bibsep}{3pt plus 0.3ex}  
\def\bibfont{\small}  

\bibliographystyle{biblstyle}
\bibliography{bibliography}

\begin{thebibliography}{59}
\providecommand{\natexlab}[1]{#1}
\providecommand{\url}[1]{\texttt{#1}}
\expandafter\ifx\csname urlstyle\endcsname\relax
  \providecommand{\doi}[1]{doi: #1}\else
  \providecommand{\doi}{doi: \begingroup \urlstyle{rm}\Url}\fi

\bibitem[Andreella et~al.(2023)Andreella, Hemerik, Weeda, Finos, and
  Goeman]{andreella2023permutation}
Andreella, A., Hemerik, J., Weeda, W., Finos, L., and Goeman, J.
\newblock Permutation-based true discovery proportions for {fMRI} cluster
  analysis.
\newblock \emph{Statistics in Medicine. Online First version}, 2023.

\bibitem[Barber and Cand{\`e}s(2015)]{barber2015controlling}
Barber, R.~F. and Cand{\`e}s, E.~J.
\newblock Controlling the false discovery rate via knockoffs.
\newblock \emph{The Annals of Statistics}, 43\penalty0 (5):\penalty0
  2055--2085, 2015.

\bibitem[Basu et~al.(2021)Basu, Fu, Saretto, and Sun]{basu2021empirical}
Basu, P., Fu, L., Saretto, A., and Sun, W.
\newblock Empirical {bayes} control of the false discovery exceedance.
\newblock \emph{arXiv preprint arXiv:2111.03885}, 2021.

\bibitem[Benjamini and Hochberg(1995)]{benjamini1995controlling}
Benjamini, Y. and Hochberg, Y.
\newblock Controlling the false discovery rate: a practical and powerful
  approach to multiple testing.
\newblock \emph{Journal of the royal statistical society. Series B
  (Methodological)}, pages 289--300, 1995.

\bibitem[Benjamini and Yekutieli(2001)]{benjamini2001control}
Benjamini, Y. and Yekutieli, D.
\newblock The control of the false discovery rate in multiple testing under
  dependency.
\newblock \emph{Annals of statistics}, pages 1165--1188, 2001.

\bibitem[Bernhard et~al.(2004)Bernhard, Klein, and Hommel]{bernhard2004global}
Bernhard, G., Klein, M., and Hommel, G.
\newblock Global and multiple test procedures using ordered p-values—a
  review.
\newblock \emph{Statistical Papers}, 45\penalty0 (1):\penalty0 1--14, 2004.

\bibitem[Best et~al.(2015)Best, Sol, Kooi, Tannous, Westerman, Rustenburg,
  Schellen, Verschueren, Post, Koster, et~al.]{best2015rna}
Best, M.~G., Sol, N., Kooi, I., Tannous, J., Westerman, B.~A., Rustenburg, F.,
  Schellen, P., Verschueren, H., Post, E., Koster, J., et~al.
\newblock {RNA-Seq} of tumor-educated platelets enables blood-based pan-cancer,
  multiclass, and molecular pathway cancer diagnostics.
\newblock \emph{Cancer cell}, 28\penalty0 (5):\penalty0 666--676, 2015.

\bibitem[Blain et~al.(2022)Blain, Thirion, and Neuvial]{blain2022notip}
Blain, A., Thirion, B., and Neuvial, P.
\newblock Notip: Non-parametric true discovery proportion control for brain
  imaging.
\newblock \emph{NeuroImage}, 260\penalty0 (119492), 2022.

\bibitem[Blanchard et~al.(2020)Blanchard, Neuvial, Roquain,
  et~al.]{blanchard2020post}
Blanchard, G., Neuvial, P., Roquain, E., et~al.
\newblock Post hoc confidence bounds on false positives using reference
  families.
\newblock \emph{Annals of Statistics}, 48\penalty0 (3):\penalty0 1281--1303,
  2020.

\bibitem[Delattre and Roquain(2015)]{delattre2015new}
Delattre, S. and Roquain, E.
\newblock New procedures controlling the false discovery proportion via
  {Romano-Wolf}’s heuristic.
\newblock \emph{The Annals of Statistics}, 43\penalty0 (3):\penalty0
  1141--1177, 2015.

\bibitem[Dickhaus(2014)]{dickhaus2014simultaneous}
Dickhaus, T.
\newblock \emph{Simultaneous statistical inference: with applications in the
  life sciences}.
\newblock Springer Science \& Business Media, 2014.

\bibitem[Ditzhaus and Janssen(2019)]{ditzhaus2019variability}
Ditzhaus, M. and Janssen, A.
\newblock Variability and stability of the false discovery proportion.
\newblock \emph{Electronic Journal of Statistics}, 13\penalty0 (1):\penalty0
  882--910, 2019.

\bibitem[D{\"o}hler and Roquain(2020)]{dohler2020controlling}
D{\"o}hler, S. and Roquain, E.
\newblock Controlling the false discovery exceedance for heterogeneous tests.
\newblock \emph{Electronic Journal of Statistics}, 14\penalty0 (2):\penalty0
  4244--4272, 2020.

\bibitem[Efron(2007)]{efron2007correlation}
Efron, B.
\newblock Correlation and large-scale simultaneous significance testing.
\newblock \emph{Journal of the American Statistical Association}, 102\penalty0
  (477):\penalty0 93--103, 2007.

\bibitem[Farcomeni(2008)]{farcomeni2008review}
Farcomeni, A.
\newblock A review of modern multiple hypothesis testing, with particular
  attention to the false discovery proportion.
\newblock \emph{Statistical methods in medical research}, 17\penalty0
  (4):\penalty0 347--388, 2008.

\bibitem[Genovese and Wasserman(2004)]{genovese2004stochastic}
Genovese, C. and Wasserman, L.
\newblock A stochastic process approach to false discovery control.
\newblock \emph{Annals of Statistics}, pages 1035--1061, 2004.

\bibitem[Genovese and Wasserman(2006)]{genovese2006exceedance}
Genovese, C.~R. and Wasserman, L.
\newblock Exceedance control of the false discovery proportion.
\newblock \emph{Journal of the American Statistical Association}, 101\penalty0
  (476):\penalty0 1408--1417, 2006.

\bibitem[Goeman and Solari(2011)]{goeman2011multiple}
Goeman, J.~J. and Solari, A.
\newblock Multiple testing for exploratory research.
\newblock \emph{Statistical Science}, 26\penalty0 (4):\penalty0 584--597, 2011.

\bibitem[Goeman and Solari(2014)]{goeman2014multiple}
Goeman, J.~J. and Solari, A.
\newblock Multiple hypothesis testing in genomics.
\newblock \emph{Statistics in medicine}, 33\penalty0 (11):\penalty0 1946--1978,
  2014.

\bibitem[Goeman et~al.(2019)Goeman, Meijer, Krebs, and
  Solari]{goeman2019simultaneous}
Goeman, J.~J., Meijer, R.~J., Krebs, T.~J., and Solari, A.
\newblock Simultaneous control of all false discovery proportions in
  large-scale multiple hypothesis testing.
\newblock \emph{Biometrika}, 106\penalty0 (4):\penalty0 841--856, 2019.

\bibitem[Goeman et~al.(2021)Goeman, Hemerik, and Solari]{goeman2021only}
Goeman, J.~J., Hemerik, J., and Solari, A.
\newblock Only closed testing procedures are admissible for controlling false
  discovery proportions.
\newblock \emph{The Annals of Statistics}, 49\penalty0 (2):\penalty0
  1218--1238, 2021.

\bibitem[Gr{\"u}nwald(2022)]{grunwald2022beyond}
Gr{\"u}nwald, P.
\newblock Beyond {Neyman-Pearson}.
\newblock \emph{arXiv preprint arXiv:2205.00901}, 2022.

\bibitem[Guo and Romano(2007)]{guo2007generalized}
Guo, W. and Romano, J.
\newblock A generalized {Sidak-Holm} procedure and control of generalized error
  rates under independence.
\newblock \emph{Statistical applications in genetics and molecular biology},
  6\penalty0 (1), 2007.

\bibitem[Guo et~al.(2014)Guo, He, Sarkar, et~al.]{guo2014further}
Guo, W., He, L., Sarkar, S.~K., et~al.
\newblock Further results on controlling the false discovery proportion.
\newblock \emph{The Annals of Statistics}, 42\penalty0 (3):\penalty0
  1070--1101, 2014.

\bibitem[Harvey et~al.(2020)Harvey, Liu, and Saretto]{harvey2020evaluation}
Harvey, C.~R., Liu, Y., and Saretto, A.
\newblock An evaluation of alternative multiple testing methods for finance
  applications.
\newblock \emph{The Review of Asset Pricing Studies}, 10\penalty0 (2):\penalty0
  199--248, 2020.

\bibitem[Hemerik and Goeman(2018)]{hemerik2018false}
Hemerik, J. and Goeman, J.~J.
\newblock False discovery proportion estimation by permutations: confidence for
  significance analysis of microarrays.
\newblock \emph{Journal of the Royal Statistical Society: Series B (Statistical
  Methodology)}, 80\penalty0 (1):\penalty0 137--155, 2018.

\bibitem[Hemerik et~al.(2019)Hemerik, Solari, and
  Goeman]{hemerik2019permutation}
Hemerik, J., Solari, A., and Goeman, J.~J.
\newblock Permutation-based simultaneous confidence bounds for the false
  discovery proportion.
\newblock \emph{Biometrika}, 106\penalty0 (3):\penalty0 635--649, 2019.

\bibitem[Hoang and Dickhaus(2022)]{hoang2022usage}
Hoang, A.-T. and Dickhaus, T.
\newblock On the usage of randomized p-values in the {S}chweder--{S}pj{\o}tvoll
  estimator.
\newblock \emph{Annals of the Institute of Statistical Mathematics},
  74\penalty0 (2):\penalty0 289--319, 2022.

\bibitem[Hochberg and Benjamini(1990)]{hochberg1990more}
Hochberg, Y. and Benjamini, Y.
\newblock More powerful procedures for multiple significance testing.
\newblock \emph{Statistics in medicine}, 9\penalty0 (7):\penalty0 811--818,
  1990.

\bibitem[Hubbard(2004)]{hubbard2004alphabet}
Hubbard, R.
\newblock Alphabet soup: Blurring the distinctions between p's and alpha's in
  psychological research.
\newblock \emph{Theory \& Psychology}, 14\penalty0 (3):\penalty0 295--327,
  2004.

\bibitem[Katsevich and Ramdas(2018)]{katsevich2018controlling}
Katsevich, E. and Ramdas, A.
\newblock Towards ``simultaneous selective inference'': post hoc bounds on the
  false discovery proportion.
\newblock \emph{arXiv:1803.06790v3}, 2018.

\bibitem[Katsevich and Ramdas(2020)]{katsevich2020simultaneous}
Katsevich, E. and Ramdas, A.
\newblock Simultaneous high-probability bounds on the false discovery
  proportion in structured, regression and online settings.
\newblock \emph{The Annals of Statistics}, 48\penalty0 (6):\penalty0
  3465--3487, 2020.

\bibitem[Langaas et~al.(2005)Langaas, Lindqvist, and
  Ferkingstad]{langaas2005estimating}
Langaas, M., Lindqvist, B.~H., and Ferkingstad, E.
\newblock Estimating the proportion of true null hypotheses, with application
  to {DNA} microarray data.
\newblock \emph{Journal of the Royal Statistical Society: Series B (Statistical
  Methodology)}, 67\penalty0 (4):\penalty0 555--572, 2005.

\bibitem[Lehmann and Romano(2005)]{lehmann2005generalizations}
Lehmann, E.~L. and Romano, J.~P.
\newblock Generalizations of the familywise error rate.
\newblock \emph{The Annals of Statistics}, 33\penalty0 (3):\penalty0
  1138--1154, 2005.

\bibitem[Lei and Fithian(2018)]{lei2018adapt}
Lei, L. and Fithian, W.
\newblock {AdaPT}.
\newblock \emph{Journal of the Royal Statistical Society. Series B (Statistical
  Methodology)}, 80\penalty0 (4):\penalty0 649--679, 2018.

\bibitem[Lei et~al.(2021)Lei, Ramdas, and Fithian]{lei2021general}
Lei, L., Ramdas, A., and Fithian, W.
\newblock A general interactive framework for false discovery rate control
  under structural constraints.
\newblock \emph{Biometrika}, 108\penalty0 (2):\penalty0 253--267, 2021.

\bibitem[Li and Barber(2017)]{li2017accumulation}
Li, A. and Barber, R.~F.
\newblock Accumulation tests for {FDR} control in ordered hypothesis testing.
\newblock \emph{Journal of the American Statistical Association}, 112\penalty0
  (518):\penalty0 837--849, 2017.

\bibitem[Liang and Nettleton(2012)]{liang2012adaptive}
Liang, K. and Nettleton, D.
\newblock Adaptive and dynamic adaptive procedures for false discovery rate
  control and estimation.
\newblock \emph{Journal of the Royal Statistical Society Series B: Statistical
  Methodology}, 74\penalty0 (1):\penalty0 163--182, 2012.

\bibitem[Love et~al.(2014)Love, Huber, and Anders]{love2014moderated}
Love, M.~I., Huber, W., and Anders, S.
\newblock Moderated estimation of fold change and dispersion for {RNA-seq} data
  with {DESeq2}.
\newblock \emph{Genome biology}, 15\penalty0 (12):\penalty0 1--21, 2014.

\bibitem[Luo et~al.(2020)Luo, He, Emery, Noble, and Keich]{luo2020competition}
Luo, D., He, Y., Emery, K., Noble, W.~S., and Keich, U.
\newblock Competition-based control of the false discovery proportion.
\newblock \emph{arXiv preprint arXiv:2011.11939}, 2020.

\bibitem[Marcus et~al.(1976)Marcus, Eric, and Gabriel]{marcus1976closed}
Marcus, R., Eric, P., and Gabriel, K.~R.
\newblock On closed testing procedures with special reference to ordered
  analysis of variance.
\newblock \emph{Biometrika}, 63\penalty0 (3):\penalty0 655--660, 1976.

\bibitem[Meinshausen(2006)]{meinshausen2006false}
Meinshausen, N.
\newblock False discovery control for multiple tests of association under
  general dependence.
\newblock \emph{Scandinavian Journal of Statistics}, 33\penalty0 (2):\penalty0
  227--237, 2006.

\bibitem[Meinshausen et~al.(2006)Meinshausen, Rice,
  et~al.]{meinshausen2006estimating}
Meinshausen, N., Rice, J., et~al.
\newblock Estimating the proportion of false null hypotheses among a large
  number of independently tested hypotheses.
\newblock \emph{The Annals of Statistics}, 34\penalty0 (1):\penalty0 373--393,
  2006.

\bibitem[Miecznikowski and Wang(2022)]{mie2022}
Miecznikowski, J. and Wang, J.
\newblock Exceedance control of the false discovery proportion via high
  precision inversion method of {Berk} {Jones} statistics.
\newblock \emph{(Submitted)}, 2022.

\bibitem[Neuvial(2008)]{neuvial2008asymptotic}
Neuvial, P.
\newblock Asymptotic properties of false discovery rate controlling procedures
  under independence.
\newblock \emph{Electronic journal of statistics}, 2:\penalty0 1065--1110,
  2008.

\bibitem[{R Core Team}()]{R}
{R Core Team}.
\newblock \emph{R: A Language and Environment for Statistical Computing}.
\newblock R Foundation for Statistical Computing.
\newblock URL \url{https://www.R-project.org/}.

\bibitem[Rajchert and Keich(2022)]{rajchert2022controlling}
Rajchert, A. and Keich, U.
\newblock Controlling the false discovery rate via knockoffs: is the+ 1 needed?
\newblock \emph{arXiv preprint arXiv:2204.13248}, 2022.

\bibitem[Rogan and Gladen(1978)]{rogan1978estimating}
Rogan, W.~J. and Gladen, B.
\newblock Estimating prevalence from the results of a screening test.
\newblock \emph{American journal of epidemiology}, 107\penalty0 (1):\penalty0
  71--76, 1978.

\bibitem[Romano and Shaikh(2006)]{romano2006stepup}
Romano, J.~P. and Shaikh, A.~M.
\newblock Stepup procedures for control of generalizations of the familywise
  error rate.
\newblock \emph{The Annals of Statistics}, 34\penalty0 (4):\penalty0
  1850--1873, 2006.

\bibitem[Romano and Wolf(2007)]{romano2007control}
Romano, J.~P. and Wolf, M.
\newblock Control of generalized error rates in multiple testing.
\newblock \emph{The Annals of Statistics}, 35\penalty0 (4):\penalty0
  1378--1408, 2007.

\bibitem[Romano et~al.(2008)Romano, Shaikh, and Wolf]{romano2008formalized}
Romano, J.~P., Shaikh, A.~M., and Wolf, M.
\newblock Formalized data snooping based on generalized error rates.
\newblock \emph{Econometric Theory}, 24\penalty0 (2):\penalty0 404--447, 2008.

\bibitem[Roquain(2011)]{roquain2011type}
Roquain, E.
\newblock Type {I} error rate control for testing many hypotheses: a survey
  with proofs.
\newblock \emph{hal-00547965v2}, 2011.

\bibitem[Rosenblatt(2021)]{rosenblatt2021prevalence}
Rosenblatt, J.~D.
\newblock Prevalence estimation.
\newblock In \emph{Handbook of Multiple Comparisons}, pages 183--210. Chapman
  and Hall/CRC, 2021.

\bibitem[Schwartzman and Lin(2011)]{schwartzman2011effect}
Schwartzman, A. and Lin, X.
\newblock The effect of correlation in false discovery rate estimation.
\newblock \emph{Biometrika}, 98\penalty0 (1):\penalty0 199--214, 2011.

\bibitem[Schweder and Spj{\o}tvoll(1982)]{schweder1982plots}
Schweder, T. and Spj{\o}tvoll, E.
\newblock Plots of p-values to evaluate many tests simultaneously.
\newblock \emph{Biometrika}, 69\penalty0 (3):\penalty0 493--502, 1982.

\bibitem[Solari and Goeman(2017)]{solari2017minimally}
Solari, A. and Goeman, J.~J.
\newblock Minimally adaptive {BH}: A tiny but uniform improvement of the
  procedure of {Benjamini} and {Hochberg}.
\newblock \emph{Biometrical Journal}, 59\penalty0 (4):\penalty0 776--780, 2017.

\bibitem[Storey(2002)]{storey2002direct}
Storey, J.~D.
\newblock A direct approach to false discovery rates.
\newblock \emph{Journal of the Royal Statistical Society: Series B (Statistical
  Methodology)}, 64\penalty0 (3):\penalty0 479--498, 2002.

\bibitem[van~der Laan et~al.(2004)van~der Laan, Dudoit, and
  Pollard]{van2004augmentation}
van~der~ Laan, M.~J., Dudoit, S., and Pollard, K.~S.
\newblock Augmentation procedures for control of the generalized family-wise
  error rate and tail probabilities for the proportion of false positives.
\newblock \emph{Statistical applications in genetics and molecular biology},
  3\penalty0 (1):\penalty0 15, 2004.

\bibitem[Vesely et~al.(2023)Vesely, Finos, and Goeman]{vesely2023permutation}
Vesely, A., Finos, L., and Goeman, J.~J.
\newblock Permutation-based true discovery guarantee by sum tests.
\newblock \emph{Journal of the Royal Statistical Society. Series B (Statistical
  Methodology). Online First version}, 2023.

\end{thebibliography}

\appendix
\section{Overview of the supplementary material}
Section  \ref{secproofs} contains proofs of results in the main paper. Section \ref{secBHnotposthoc} provides simulation results that show that Benjamini-Hochberg does not have a flexible, post hoc interpretation.  Section \ref{secadsim} provides simulation results as in Section  \ref{secsims} of the main article, but with $m=10^4$ and $\pi_0=0.9$.
An analysis of real RNA-Seq data is in Section \ref{secdata}. 
Section \ref{seccompest} contains a theoretical comparison of the means of the estimators $\overline{\pi}_0'$ and $\hat{\pi}_0'$ from the main paper. 
Section \ref{secotherm} contains theory on additional flexible, median unbiased estimates of $\pi_0$ and FDP based on closed testing.

\section{Proofs of results} \label{secproofs}

\subsection{Proof of Theorem \ref{thmenvelope}}

\begin{proof}
If $r$ is a vector containing, say, $l_r$ \emph{p}-values, then we write $\R(r,t)=\{1\leq i \leq l_r: r_i<t\}$,  to make the dependence on both  the \emph{p}-values and the threshold explicit. Analogously we define ${V}(r,t)$,  $\overline{V}(r,t)$ and $\tilde{B}(r)$. We use the convention that  $\R(t)=\R(p,t)$, ${V}(t)={V}(p,t)$, $\overline{V}(t)=\overline{V}(p,t)$ and $\tilde{B}=\tilde{B}(p)$.

Let $E$ be the event $\{\tilde{B}(p)\geq \tilde{B}(1-q)\}$ and suppose $E$ holds.
Note that $$\overline{V}(1-q,t)= |\{1\leq i \leq N: 1-q_i>1-t\}| = |\{1\leq i \leq N: q_i<t\}|=R(q,t).$$
Thus
$$  \tilde{B}(1-q)= \min\Big\{B\in \mathbb{B}: \bigcap_{t\in \mathbb{T}} \big\{B(t)\geq R(q,t)  \big\} \Big\} .$$
Note that $V(q,t)=|\N\cap \R(t)|=R(q,t)$.
Hence, for every $t\in \mathbb{T}$, $$V(t)=R(q,t)\leq \tilde{B}(1-q,t)\leq \tilde{B}(p,t).$$
Since $\mathbb{P}(E)\geq 0.5$, it follows that 
$$\mathbb{P}\Big[\bigcap_{t\in \mathbb{T}} \big\{ V(t)\leq  \tilde{B}(p,t)\big\} \Big] \geq 0.5,$$
as was to be shown.

Now we show that $\tilde{B}'$ is also a confidence envelope. 
Assume $E$ holds. Then 
for every $l\in \mathbb{T}$, 
$[R(l)-\tilde{B}(l)]^+\leq S(l)$, where we recall that $S(l)=R(l)-V(l)$. Since $S$ is non-decreasing in $l$, it follows that $$\max\{[R(l)-\tilde{B}(l)]^+: l\in \mathbb{T}, l\leq t)\}\leq S(t)$$ for every $t\in \mathbb{T}$. Consequently $V(t)\leq \tilde{B}'(t)$ for every $t\in \mathbb{T}$. This is true whenever  $E$ holds. Since $\mathbb{P}(E)\geq 0.5$, it follows that $\tilde{B}'$ is a confidence envelope. It improves $\tilde{B}$ when $[R(\cdot)-\tilde{B}(\cdot)]^+$ in strictly decreasing somewhere on $\mathbb{T}$.

\end{proof}

\subsection{Proof of Theorem \ref{thmctmethod}}

\begin{proof}
We first define a closed testing procedure \citep{goeman2011multiple,goeman2021only}. For every $I\in\mathcal{M}$, 
write $\overline{V}_I'(t):=|\{i\in I: p_i\geq 1-t\}|$ and 
$$\tilde{B}_I:=\min\Big\{B\in \mathbb{B}: \bigcap_{t\in \mathbb{T}} \big\{B(t)\geq\overline{V}_I(t)\big\} \Big\}.$$ 
Define $\tilde{B}'_I$ analogously to $\tilde{B}'$ from Theorem \ref{thmenvelope}:
$$\tilde{B}_I'(t) := R_I(t) -\max\{[R_I(l)-\tilde{B}_I(l)]^+: l\in \mathbb{T}, l\leq t)\} ,$$
For each $I\in\mathcal{M}$, consider the following local test \citep{goeman2011multiple} for the intersection hypothesis $H_I=\cap_{i\in I}H_i$:  
$$\phi^{loc}_I=\mathbbm{1}\big(\exists t\in\mathbb{T}:\text{ } R_I(t)>\tilde{B}'_I(t)\big),$$
where $\mathbbm{1}(\cdot)$ denotes the indicator function.
The closed testing procedure \citep{goeman2011multiple} corresponding to these local tests is defined by the tests
$$\phi^{ct}_I=\min\{\phi^{loc}_J: I\subseteq J\in \mathcal{M}\}.$$ Note that this equals $\phi^{loc}_{I\cup\{1\leq i \leq m: p_i>1/2\}}$, where we used that $\mathbb{T}\subseteq[0,1/2)$. 
By definition, this equals
\begin{equation} \label{eqltest}
\mathbbm{1}\big(\exists t\in\mathbb{T}:\text{ } R_{I\cup\{1\leq i \leq m: p_i>1/2\}}(t)>\tilde{B}'_{I\cup\{1\leq i \leq m: p_i>1/2\}}(t)\big).
\end{equation}
Since $\mathbb{T}\subseteq[0,1/2)$, we have
$$R_{I\cup\{1\leq i \leq m: p_i>1/2\}}(t) = R_{I}(t), $$
$$\tilde{B}'_{I\cup\{1\leq i \leq m: p_i>1/2\}}(t)=\tilde{B}'_{I}(t).$$
Thus, \eqref{eqltest}, equals
$$\mathbbm{1}\big(\exists t\in\mathbb{T}:\text{ } R_I(t)>\tilde{B}'(t)\big).$$

By assumption, $\tilde{B}(q)=\tilde{B}_{\N}(p)$ is an envelope. Consequently, $\tilde{B}'_{\N}$ is an envelope, which follows from an argument analogous to the second part of the proof of Theorem \ref{thmenvelope}. Hence, the local test  $\phi^{loc}_{\N}$ rejects with probability at most $0.5$. 
By \citet[][p.588]{goeman2011multiple}, simultaneously over all $I\in\mathcal{M}$, $50\%$ confidence bounds  for $|\mathcal{N} \cap I|$ are
$$\max\big\{|A|: \emptyset\neq A \subseteq  I \text{ and } \phi^{ct}_A=0 \big\}=$$
$$\max\big\{|A|: \emptyset\neq A \subseteq I \text{ and } \mathbbm{1}[\exists t\in\mathbb{T}:\text{ } R_A(t)>\tilde{B}'(t)]=0 \big\},$$
which equals the quantity in expression \eqref{formctbound}.
\end{proof}

\subsection{Proof of Theorem \ref{thmequivct}}

\begin{proof}
We first prove the first claim of the theorem and then the second claim.
\\
\\
\noindent \emph{Proof of the first claim.}\\
Let $s\in\mathbb{T}$. We will show that  $\overline{B}(\R(s)) = \tilde{B}'(s)$.   If $R(s)=0$, then $\overline{B}(\R(s)) =0= \tilde{B}'(s)$. Now assume that $R(s)>0$. For $A\in\mathcal{M}$, we define $\phi^{ct}_A$ as in the proof of Theorem  \ref{thmctmethod}. We have $\overline{B}(\R(s))=$ 
 $$\max\big\{|A|: \emptyset \neq A\subseteq \R(s) \text{ and }\forall t\in\mathbb{T}: |\R_A(t)|\leq \tilde{B}'(t)\big\}=$$
 $$\max\big\{|A|: \emptyset \neq A\subseteq \R(s) \text{ and }\forall t\in\mathbb{T}: |\{i \in A: p_i\leq t\}| \leq \tilde{B}'(t)\big\} .$$
Let $r_1,...,r_{R(s)}$ be the \emph{p}-values with indices in $\R(s)$, sorted in descending order, i.e. $r_1\geq ...\geq r_{R(s)}.$
Note that the above equals
\begin{equation} \label{eq:s}
  \max\big\{1\leq j\leq R(s): \forall t\in\mathbb{T}: |\{1\leq i \leq j: r_i \leq t\}| \leq \tilde{B}'(t)\big\}.
  \end{equation}

  Note that for every  $1\leq j\leq R(s)$, on $\mathbb{T}\cap [r_j,s]$ the discrete  function $t\mapsto \tilde{B}'(t)$ does not increase faster than the discrete function $t\mapsto |\{1\leq i \leq j: r_i \leq t\}|$ does, by construction of  $\tilde{B}'(t).$ 
  Moreover, on $\mathbb{T}\cap [r_j,s]$, the function 
 $t\mapsto \tilde{B}'(t)$ does not have jumps at points other than $r_1$,..,$r_j$.
  Consequently, for $1\leq j\leq R(s)$,  the event $$\big \{\forall t\in\mathbb{T}: |\{1\leq i \leq j: r_i \leq t\}| \leq \tilde{B}'(t)\big \}$$ happens if and only if
  $$\big \{ |\{1\leq i \leq j: r_i \leq s\}| \leq \tilde{B}'(s)\big \},$$
 happens, i.e., if and only if   $\{j \leq \tilde{B}'(s)\big \}$ happens.
  Thus, \eqref{eq:s} equals
  $\max\{1\leq j\leq R(s):  j \leq \tilde{B}'(s)\}  = R(s)\wedge \tilde{B}'(s)=\tilde{B}'(s)$, as was to be shown.
\\
\\
\noindent \emph{Proof of the second claim.}\\
Assume the procedure from Theorem \ref{thmctmethod} is admissible, i.e., there do not exist bounds $\overline{B}^*(I)$ such that 
$$\mathbb{P}\Big[ \bigcap_{I\in\mathcal{M}} \big\{ |\N\cap I|\leq    \overline{B}^*(I) \big\}   \Big]\geq0.5$$
and such that $\overline{B}^*(I)\leq \overline{B}^*(I)$ for all $I\in\mathcal{M}$ and 
$\mathbb{P}\{\exists I\in\mathcal{M}:\overline{B}^*(I) < \overline{B}(I)\}>0.$
Suppose that $\tilde{B}'$ is not admissible. We will show that this leads to a contradiction, which finishes the proof. 

If $\tilde{B}'$ is not admissible, then there is an envelope $B:\mathbb{T}\rightarrow \mathbb{N}$ such that $B(t)\leq \tilde{B}'(t)$ for all $t\in\mathbb{T}$ and such that $\mathbb{P}\{\exists t\in\mathbb{T}:B(t)< \tilde{B}'(t)\}>0.$ For every $I\in\mathcal{M}$, define  $\overline{B}^*(I)$ as in expression  \eqref{formctbound}, but replacing $\tilde{B}'(t)$ by $B(t)$. Note that for all $I\in\mathcal{M}$, we have  $\overline{B}^*(I)\leq \overline{B}(I)$.

By the first claim, proven above, we know that for every $t\in \mathbb{T}$,
$\overline{B}^*(\R(t))=B(t)$. Since $B$ is a uniform improvement of $\tilde{B}'$, the envelope
$t\mapsto  \overline{B}^*(\R(t))$ is a uniform improvement of the envelope $t\mapsto  \overline{B}(\R(t))$. This means that with strictly positive probability, there exists an $I\in\mathcal{M}$ for which  $\overline{B}^*(I)< \overline{B}(I)$.
This means that the procedure  of Theorem \ref{thmctmethod} is not admissible, which contradicts our premise. Thus, $\tilde{B}'$ is admissible.
\end{proof}

\subsection{Proof of Proposition \ref{propcomputekappa}}

\begin{proof}
We will first show that   formula \eqref{reformulation_kappamin} holds if we define
$$ \kappa_0 = \max\big\{ \kappa\in(0,\infty]:  B^{\kappa}(s_1)\geq \overline{V}(s_1)      \big\},$$
$$ \kappa_i = \max\big\{ \kappa\in(0,\infty]:  B^{\kappa}(1-p_i)\geq \overline{V}(1-p_i)      \big\},$$
$1\leq i \leq m$.
We then show that this $\kappa_i$ is actually equal to $\frac{1-p_i+c}{\overline{V}(1-p_i)}$ (and analogously for $\kappa_0$), which finishes the proof.

\emph{First step.} By definition,
\begin{equation} \label{eq:herhkappa}
\kappa_{\text{max}}=\max\Big\{\kappa\in (0,\infty]:\text{ } \bigcap_{t\in \mathbb{T}} \big\{ B^{\kappa}(t) \geq  \overline{V}(t) \big\} \Big\}.
\end{equation}
Let $$A:=\Big\{s_1\Big\}\cup\Big\{1-p_i: 1\leq i \leq m \text{ and }1-p_i\in\mathbb{T}\Big\}.$$
Note that on $\mathbb{T}$, the non-decreasing, discrete, right-continuous function $t\mapsto  \overline{V}(t)$ has a jump at every $t\in \mathbb{T}$ for which there is a $1\leq j\leq m$ such that $t=1-p_j$. 

Consequently, the expression \eqref{eq:herhkappa} equals 
$$\kappa_{\text{max}}=\max\Big\{\kappa\in (0,\infty]:\text{ } \text{for all } s\in A:  \text{ }  B^{\kappa}(s) \geq  \overline{V}(s)  \Big\}=$$
$$\kappa_0\wedge\min\Big\{\kappa_i: 1\leq i \leq m \text{ and } 1-p_i\in \mathbb{T} \Big\},$$
where
$$\kappa_0=\max\big\{ \kappa\in(0,\infty]:  B^{\kappa}(s_1)\geq \overline{V}(s_1 )     \big\}$$
and for $1\leq i \leq m$
$$\kappa_i= \max\big\{ \kappa\in(0,\infty]:  B^{\kappa}(1-p_i)\geq \overline{V}(1-p_i)      \big\}. $$

\emph{Second step.}
We have
$$\kappa_i= \max\big\{ \kappa\in(0,\infty]:  |\{1\leq i \leq m: i\kappa - c \leq 1-p_i\}|\geq \overline{V}(1-p_i)      \big\} =$$
$$ \max\big\{ \kappa\in(0,\infty]:    [\overline{V}(1-p_i)]\cdot  \kappa - c \leq 1-p_i    \big\} = \frac{1-p_i+c}{\overline{V}(1-p_i)}$$ and analogously for $\kappa_0$.
\end{proof}

\subsection{Proof of Theorem \ref{fdpcontrol}}

\begin{proof}
Let $E$  be the event that for all $t\in \mathbb{T}$ we have $FDP(t)\leq B(t)/R(t)$.
Suppose $E$ holds. Let $\gamma\in[0,1]$. We will show that $FDP_{\gamma}\leq \gamma$. This will then finish the proof of the last statement of the theorem, since $\mathbb{P}(E)\geq 0.5$.
Note that $FDP_{\gamma}=FDP[t_{\text{max}}(\gamma)]\leq
 B[t_{\text{max}}(\gamma)]/R[t_{\text{max}}(\gamma)]
 \leq \gamma$, so that we are done.
\end{proof}

\subsection{Proof of Proposition \ref{propadaptedp}}

\begin{proof}
First of all, note that the minimum in \eqref{defpad} exists, since $B$ takes values in $\mathbb{N}$ by definition.

Let $E$  be the event that for all $t\in \mathbb{T}$ we have $FDP(t)\leq B(t)/R(t)$.
Let $r$ be the number of hypotheses that this procedure rejects, i.e., the number of $H_i$ with $ p_i^{\text{ad}}\leq \gamma $.
Note that there is a $t'\in \mathbb{T}$ such that $|\{1\leq i \leq m: p_i\leq t'\}|=r$ and such that $B(t')/R(t')\leq\gamma$. Hence $FDP(t')\leq\gamma.$ This holds simultaneously over all $\gamma\in[0,1]$, since $E$ holds. Since $\mathbb{P}(E)\geq 0.5$, this finishes the proof. 
\end{proof}

\subsection{Proof of Proposition \ref{rewriteadaptedp}}

\begin{proof}
Note that $[\max\{s_1,p_i\},s_2]$ is simply  the set  $\mathbb{T}\cap{[p_i,1]}$ from definition \eqref{defpad}.
The discrete function $t\mapsto B(t)/R(t)$ only has jumps downwards at points $t\in \{p_1,...,p_m\}$. Thus, on $\mathbb{T}=[s_1,s_2]$,  this function takes its minimum at some $t\in \{s_1,p_1,p_2,...,p_m\}$. Hence, to compute the minimum, it suffices to take the minimum over all  $t\in \big[\max\{s_1,p_i\},s_2\big] \cap\big\{s_1,p_1,p_2,...,p_m\big\}$, as was to be shown.
\end{proof}

\section{Benjamini-Hochberg with post hoc $\alpha$: a simulation study} \label{secBHnotposthoc}

\subsection{Simulation setup}
In this simulation study, we evaluated the expected value of $\mathbb{E}(FDP/\alpha)$ for BH, in a setting where $\alpha$ was chosen after seeing the data. 
The idea was to choose $\alpha$ in such a way as to mimic how a real human might choose $\alpha$ based on the data.
The results below indicate that $\mathbb{E}(FDP/\alpha)$ can be much larger than $1$ when $\alpha$ is chosen post hoc. This implies that conditional on $\alpha$, $\mathbb{E}(FDP/\alpha)$ is not generally bounded by 1, i.e., the conditional expectation of the FDP is not generally below $\alpha$. 

The distribution of the simulated data was as in Section \ref{secsims} in the main paper. We focused on the situation where there was no dependence between the 1000 \emph{p}-values. The number of false hypotheses varied between 0 and 40. The effect size $\Delta$ varied between 2 and 4.

The target $FDP$, $\alpha$, was chosen in a data-dependent way as follows. We fixed a non-negative integer $R_{\text{des}}$, which we call the \emph{desired number of rejections}. 
Let $p_{(1)}^{bh}\leq ...\leq p_{(1000)}^{bh}$ denote the FDR-adjusted \emph{p}-values computed by BH.
Thus $p_{(R_{\text{des}})}^{bh}$ was the $R_{\text{des}}$-th smallest  adjusted \emph{p}-value.

The value $\alpha$ was chosen to be the value in the interval $[0.01,0.2]$ that was closest to $p_{(R_{\text{des}})}^{bh}$. For example, if $p_{(R_{\text{des}})}^{bh}=0.06$, then $\alpha=0.06; $ if $p_{(R_{\text{des}})}^{bh}=0.007$, then $\alpha=0.01$; and if  $p_{(R_{\text{des}})}^{bh}=0.38$, then $\alpha=0.2$. Note that if $p_{(1)}^{bh}$ was larger than 0.2, then $p_{(1)}^{bh}>\alpha$, so that there were no rejections and the FDP was 0.

The reason behind choosing $\alpha$ in this way is that a human might choose $\alpha$ in a similar fashion. A person often has an a priori idea of what the desired number of rejections is and may want to choose $\alpha$ accordingly. Moreover, a person would perhaps not want to use a very large or very small value for $\alpha$. That is why we restricted $\alpha$ to be in the set  $[0.01,0.2]$.

\subsection{Simulation results}
The simulation results are shown in Table \ref{table:BHph}. Here, for different settings, the estimate of $\mathbb{E}(FDP/\alpha)$ is shown. Each estimate was based on $5\cdot 10^4$ repeated simulations.
The results shown are based on simulations with $R_{\text{des}}=20$. For other values of $R_{\text{des}}$, we obtained comparable results.

\begin{table}[h!]
\caption{ The expected value of $FDP/\alpha$, depending on the total number of false hypotheses  and the effect size  $\Delta$. 
} 
\begin{center}
    \begin{tabular}{ l l l l l }    
\hline \\[-0.4cm]
  &  \multicolumn{3}{l}{\qquad \quad  $\Delta$}\\ \cline{2-4} 
Nr. of false hyp. &   2 \quad & 3 \quad & 4 \\ \hline 
0  & 3.23 & 3.23 & 3.23  \\ 
5   & 2.80 & 2.06  & 1.79\\ 
20   & 2.00 & 1.21 & 1.30 \\ 
40 &  1.40 & 1.17 & 0.98 \\  \hline  
       \end{tabular}
\label{table:BHph}
\end{center}
\end{table}

Note that in most of the simulations settings, $\mathbb{E}(FDP/\alpha)$ was estimated to be larger than 1. There is a clear pattern in the results: if there were few false hypotheses and small effect sizes, then $\mathbb{E}(FDP/\alpha)$ was the largest. In particular, it can be seen that for the setting where all hypotheses were true, the estimate of $\mathbb{E}(FDP/\alpha)$ was 3.23. This means that in this setting, conditional on the  post hoc chosen $\alpha$, $\mathbb{E}(FDP)$ often greatly exceeds $\alpha$. 

We see that when there were many strong effects, then $\mathbb{E}(FDP/\alpha)$ was smaller. This is as expected, since in those settings, most of smallest FDR-adjusted  \emph{p}-values tend to be correspond to false hypotheses. Likewise, note that if all 1000 hypotheses would be false, then  $\mathbb{E}(FDP/\alpha)=0$ of course.

\section{Additional simulations} \label{secadsim}

The present section provides simulation results as in Section \ref{secsims} of the paper, but with $m=10^4$, i.e., with 10 times more hypotheses. Other than that the settings were the same and $\pi_0$ was still $0.9$, so that there were now $10^3$ false hypotheses. The results are in Figure \ref{fig:powersupp}. Each estimate in the figure is based on $10^3$ repeated simulations.
Note that in this setting, our novel method performed somewhat better than CT$+$Simes overall. 

The performance of K\&R was comparable to that of the novel method.  As illustrated in Figure \ref{fig:powerextra} of the main article, the novel method usually beats K\&R if $\pi_0<0.9$. The reason is that  K\&R is not adaptive, i.e., it is quite conservative when $\pi_0$ is far from 1.

\begin{figure}[ht] 
\centering
  \includegraphics[width=\linewidth]{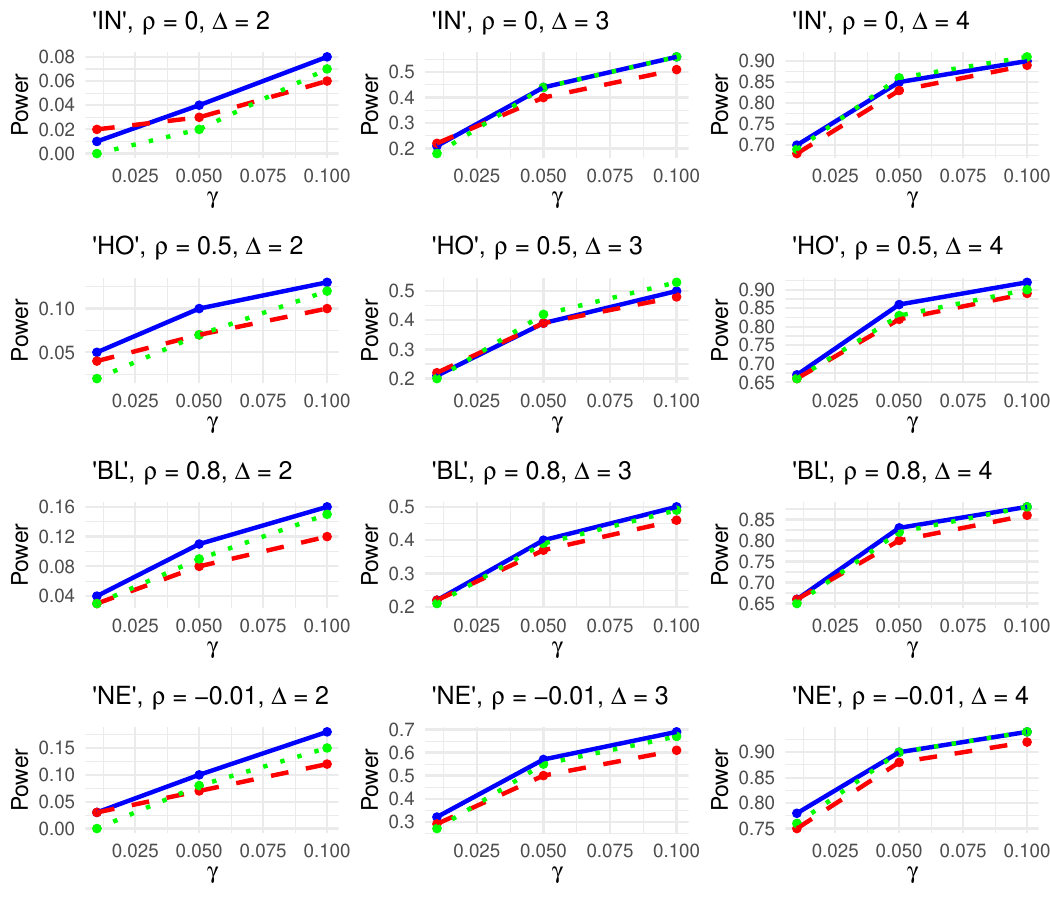}
  \caption{The power of our novel method (solid lines), CT$+$Simes (dashed lines) and K\&R (dotted lines)  as depending on $\gamma$, for various settings with $m=10^4$ and $\pi_0=0.9$.} \label{fig:powersupp}
\end{figure}

\section{Data analysis} \label{secdata}

We analyzed part of the RNA-Seq count data discussed in \citet{best2015rna}. The data are from 283 blood platelet samples. We downloaded the data from the Gene Expression Omnibus, accession GSE68086. 
The samples are from patients with one of six types of cancer, as well as controls. We used  the data from the 35 patients with pancreatic cancer and  the 42 patients with colorectal cancer (n=77).  The data contain read counts of 57736 transcripts. We removed the data on transcripts for which more than $75\%$ of the counts were 0, resulting in data on 10042 transcripts.

For each of the $m=10042$ transcripts, we tested the null hypothesis of no association with type of cancer, i.e, pancreatic or colorectal. To compute the $m$ raw \emph{p}-values, we used the R package DE-Seq2, which is currently the most standard approach \citep{love2014moderated}. This method employs a negative binomial model for each transcript.

We used the computed raw \emph{p}-values as input for our method of Sections \ref{secspecific}-\ref{secadjp}.
The method requires choosing the tuning parameter $c$ a priori.  
We chose $c=1/(2m)$, as recommended in Section \ref{secspecific} and used in the simulations. 
Figure \ref{fig:dataan} shows the number of rejections and the simultaneous bound for the number of false positives, as functions of the rejection threshold $t$. Note that for small thresholds, $R(t)$ is much larger than $\tilde{B}(t)$, so that the $mFDP$ is small. 
Note that roughly speaking, $R(t)$ increased faster in $t$ than $\tilde{B}(t)$. Consequently, the improved envelope $\tilde{B}'(t)$ from the second part of Theorem \ref{thmenvelope} was almost identical to  $\tilde{B}(t)$.
  In Figure \ref{fig:dataanFDP} the corresponding simultaneous $50\%$-confidence upper bounds for the FDP are shown.

\begin{figure}[ht] 
\centering
  \includegraphics[width=0.8\linewidth]{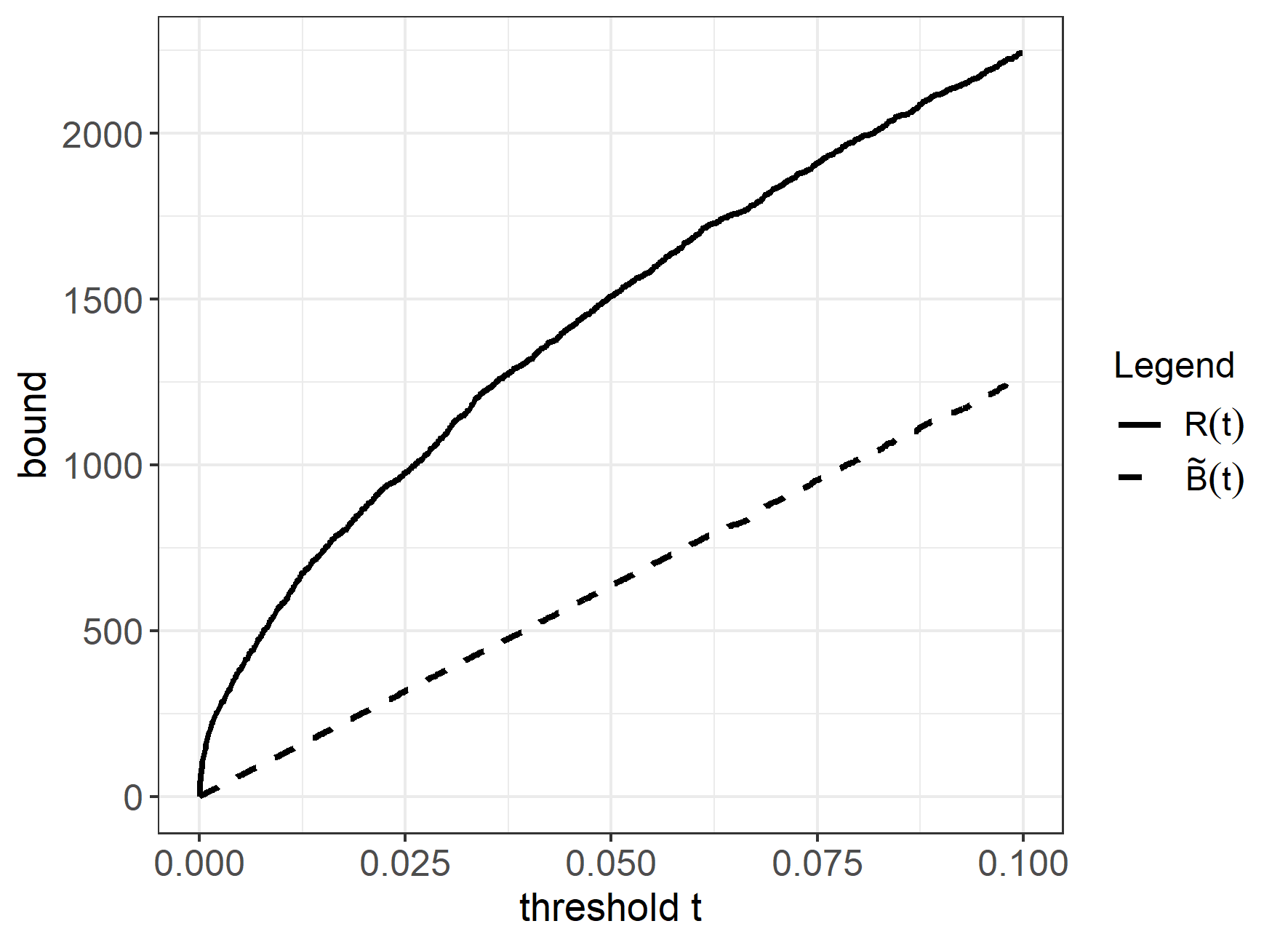}
  \caption{The number of rejections and the simultaneous bound for the number of false positives, as functions of the rejection threshold $t$.}
\end{figure} \label{fig:dataan}

\begin{figure}[ht] 
\centering
\captionsetup{width=0.8\linewidth}

\pgfplotsset{every axis/.append style={
                    label style={font=\Large},
                    tick label style={font=\Large}  
                    }}

\begin{tikzpicture}[scale=0.75]
\begin{axis}[
anchor=origin, 
width=12.5cm, height=10cm,
  axis y line = none,
  axis x line = top,
  xmin=0.00001, xmax=0.1, xlabel={Number of rejections $R(t)$}, ymin=0, ymax=0.6, xtick={0.001,0.01,0.03,0.05,0.07,0.09}, xticklabels={176, 582,1096,1510,1836,2120},
  font=\Large   
]
\end{axis}
\begin{axis}[
    anchor=origin,
                legend style={at={(12.5cm,.7)}, anchor=west, legend columns=1, draw=none},
    width=12.5cm, height=10cm,
axis x line=bottom, xlabel={Cut-off $t$}, xmin=0.00001, xmax=0.1,  xtick={0.001,0.01,0.03,0.05,0.07,0.09}, xticklabels={.001,0.01,0.03,0.05,0.07,0.09}, 
       axis y line=left, ylabel={Bound $\overline{FDP}(t)$}, ymin=0, ymax=0.6
       ]

\addplot[ line width = 1, very thick] table[x="cutoffs",y="mFDPc2", col sep=comma,smooth]{FDPdata.csv};
\end{axis}
\end{tikzpicture}
\caption{The  simultaneous $50\%$-confidence upper bound $\overline{FDP}(t):=\tilde{B}'(t)/R(t)$ as function of the rejection threshold $t$. For several values of $t$,  $R(t)$ is shown at the top of the graph. }
\label{fig:dataanFDP}
\end{figure}

We used Algorithm \ref{alg1} to compute mFDP-adjusted \emph{p}-values. 
These are useful, because the number of hypotheses that the method rejects can be computed as the number of adjusted \emph{p}-values that are at most $\gamma$.
We used the adjusted  \emph{p}-values for  generating Table \ref{table:dataan}, where the number of rejections is shown for various values of the mFDP-threshold $\gamma$. For comparison, we also show the number of rejections with BH for $\alpha=\gamma$. Note, however, that BH only allows using one value for $\alpha$, which moreover needs to chosen before seeing the data.

\begin{table}[!ht] \normalsize  
\caption{The number of rejected hypotheses for different values of $\gamma$, for two methods.
The first method is our procedure for simultaneous mFDP control. The inferences with this method are simultaneous, which means that with $50\%$ confidence, $FDP_{\gamma}\leq \gamma$ for all $\gamma\in\mathbb{T}$ simultaneously. The second method is BH, which ensures that if $\alpha$ is chosen prior to the data analysis, then $FDR=FDR(\alpha)\leq \alpha$, but not if $\alpha$ is chosen after inspecting the data.
} 
\begin{center}
    \begin{tabular}{ l l l l l}    
\hline \\[-0.4cm]
  &  \multicolumn{3}{l}{\qquad  $\gamma$ (i.e., $\alpha$)}\\ \cline{2-4} 
Method &    0.01 \quad & 0.05 \quad & 0.1 \quad \\ \hline 
mFDP control   & 24 & 125 & 243 \\ 
FDR control with BH  &  12 & 163 & 287 \\  \hline  
       \end{tabular}
\label{table:dataan}
\end{center}
\end{table}

The interpretation of the table is as follows.
If the user  first chooses e.g. $\gamma=0.05$, then she can reject 125 hypotheses. This means that with probability at least $50\%$, the true FDP is below 0.05 if we reject the 125 hypotheses with the smallest \emph{p}-values. Based on this promising result, the user may wonder how many hypotheses are rejected when $\gamma$ is decreased to 0.01. She finds that then 24 hypotheses are rejected. Since $\gamma=0.01$, the FDP is below 0.01 with probability at least $50\%$. The FDP must be a multiple of $1/24$,  so that it follows that with probability $50\%$, there are no false positives when these hypotheses are rejected. Thus, if, hypothetically, we repeat the experiment many times, then in at least $50\%$ of the cases, for all values of $\gamma$ that the user considers,  the $FDP=FDP_{\gamma}$ will be below  $\gamma$.

\section{Theoretical comparison of  $\overline{\pi}_0'$ and $\hat{\pi}_0'$}  \label{seccompest}

 The following result  says that often,  
  $\mathbb{E}(\overline{\pi}_0')\geq  \mathbb{E}(\hat{\pi}_0')$ if $t\in(0,0.5)$ and 
 $\mathbb{E}(\overline{\pi}_0')\leq  \mathbb{E}(\hat{\pi}_0')$ if $t\in(0.5,1)$. The difference between the expected values is often small, but usually strictly positive. Note that $t$ is often taken larger than $0.5$ in practice and then our estimator is less conservative than Schweder-Spj{\o}tvoll-Storey.
  
  \begin{proposition} \label{mean_of_estimator}
  Assume that all p-values have non-increasing densities or that  both $\mathbb{E}|\{i:p_i = 1-t\}|=0$ and
  \begin{equation} \label{decr}
  \frac{\mathbb{E}\big( |\{1\leq i \leq m: p_i> 1-t' \}|\big) }{t'} \leq
  \frac{\mathbb{E}\big( |\{1\leq i \leq m: p_i> t' \}|\big)}{1-t'},
  \end{equation}
  where $t'=\min\{t,1-t\}$
  Then, $\mathbb{E}(\overline{\pi}_0')\geq  \mathbb{E}(\hat{\pi}_0')$ if  $0\leq t <0.5 $   and $\mathbb{E}(\overline{\pi}_0')\leq  \mathbb{E}(\hat{\pi}_0')$ if $ 0.5<t\leq 1$. These inequalities are strict if the inequality \eqref{decr} is strict.

  \end{proposition}

 \begin{proof}

We have  
$$\hat{\pi}_0'-\overline{\pi}_0'= \frac{|\{i:p_i> 1-t\}|}{mt} -  \frac{ |\{i:p_i \geq 1-t\}| + |\{i:p_i> t\}|  }{m} =$$
$$  \frac{|\{i:p_i>1-t\}|}{mt} -  \frac{t\big(|\{i:p_i \geq 1-t\}| + |\{i:p_i>t\}|\big )}{mt} = $$
\begin{equation} \label{difestimates}
  \frac{ (1-t) |\{i:p_i > 1-t\}|  -t |\{i:p_i =1-t\}|   - t |\{i: p_i> t\}|   }{mt}.
\end{equation}

For $t=1/2$, this is 0. Now suppose $t \in (0,1/2)$.
If the densities $f_i$ of the \emph{p}-values $p_i$ are   non-increasing, then
$$ \mathbb{E} (|\{i:p_i > 1-t\}|/t) =$$
$$t^{-1} \sum_{i=1}^m  \int_{1-t}^{1} f_i(x) dx \leq$$
$$ (1-t)^{-1} \sum_{i=1}^m  \int_{t}^{1} f_i(x) dx = $$
$$ \mathbb{E}( |\{i: p_i> t\}| )/(1-t).$$ 
Here, the inequality is due to fact that the average of $f_i(x)$ on $[1-t,1]$ is smaller than or equal to the average of  $f_i(x)$ on $[t,1]$, since $t<1-t$ and the $f_i$ are non-increasing.
Multiplying both sides by $t(1-t)$ gives
$$\mathbb{E} \Big(\lambda |\{i:p_i > 1-t\}|    - t |\{i: p_i > t\}|\Big)\leq 0,$$
so that the expected value of \eqref{difestimates} is at most $0$.
In case $t\in(1/2,1)$, an analogous proof shows that the expected value of \eqref{difestimates} is at least $0$.

 \end{proof}

\section{Other methods for estimation of $\pi_0$ and FDP} \label{secotherm}
The main paper contains a useful framework for estimation of $\pi_0$ and the FDP, inspired by the Schweder-Spj{\o}tvoll-Storey estimator of $\pi_0$. For every cut-off $t$ considered, the paper derives a median unbiased estimate $\overline{V}(t)$ of the number of false positives $V(t)$. In Theorem 3, these bounds were used to derive a \emph{confidence envelope}, which provides simultaneous $50\%$-confidence bounds for $FDP(t)$. This confidence envelope was in turn used to provide flexible mFDP control.

In this appendix, we employ existing results related to closed testing, to generalize the definition of $\overline{V}(t)$. We will obtain a wide range of novel median unbiased estimates of $\pi_0$ and $V(t)$. Using the approach developed in the main paper, the novel estimates $\overline{V}(t)$ could also be used to provide novel confidence envelopes and FDP controlling procedures. These would be a generalization of the  methods developed in the main paper.

Note that Theorem 4 in the main paper also relies on closed testing theory, but there we directly constructed simultaneous bounds. Below, however, we construct bounds for fixed $t$, which could subsequently be used  to construct envelopes.

The procedures that we will derive, vary in terms of properties such as accuracy and bias. These properties always depend on the distribution of the data and there is no method that is uniformly best. 
We consider the method developed in the main paper particularly valuable, because it is sensible and relatively simple. For this reason, we focus on that method in the main paper.


\subsection{The Schweder-Spj{\o}tvoll-Storey method and closed testing}  \label{secstoreyct}
In Section 2, we derived median unbiased estimators of $\pi_0$ and the FDP.
Here we first derive the same result, but from the perspective of \emph{closed testing} \citep{marcus1976closed,goeman2011multiple,goeman2021only}. This perspective will reveal the broad class of novel estimators.

We start by explaining what closed testing is and how it can be used to obtain median unbiased estimators. The closed testing principle goes back to \citet{marcus1976closed} and can be used to construct multiple testing procedures that control the family-wise error rate. \citet{goeman2011multiple} show that such  procedures  can be extended to provide confidence bounds for the number of true hypotheses in all sets of hypotheses simultaneously. They construct $(1-\alpha)100\%$- confidence upper bounds -- $(1-\alpha)$-bounds for short -- for the FDP, where $\alpha\in (0,1)$. In this paper, we always consider $\gamma=\alpha=0.5$.

Let $\C$ be the collection of all nonempty subsets of $\{1,...,m\}$.
For every  $I\in \C$ consider the intersection hypothesis
$H_I=\cap_{i\in I} H_i$. This is the hypothesis that all $H_i$ with $i\in I$ are true.
For every $I\in \C$, consider some \emph{local test} $\delta(I)$, which is $1$ if $H_I$ is rejected and $0$ otherwise. Assume the test $\delta(\N)$ has level at most $\alpha$, so that $\mathbb{P}(\delta(\N)\geq 1)$ is bounded by $\alpha$.
Define $$\X=\{I \in \C: \delta(J)=1 \text{ for all } I\subseteq  J\subseteq \C\}.$$
The general closed testing procedure rejects all intersection hypotheses $H_I$ with $I\in \X$. It is well-known that this procedure controls the familywise error rate \citep{marcus1976closed}.
In \citet{goeman2011multiple} it is shown that we can also use the set $\X$ to provide a  $(1-\alpha)$-confidence upper bound for the number of true hypotheses in any $I\in\C$.
They show that
$$t_{\alpha}(I):=\max\{J \subseteq I:   J\not\in \X\}.$$
is a  $(1-\alpha)$-confidence upper bound for $|\N\cap I|$. In fact, they show that the bounds $t_{\alpha}(I)$ are valid simultaneously over all $I\in\C:$
\begin{equation} \label{GSresult}
   \mathbb{P} \Bigg[\bigcap_{I\in\C}\Big\{     |\N\cap I| \leq t_{\alpha}(I)   \Big\} \Bigg]\geq 1-\alpha.
   \end{equation}
The proof is short: $H_{\N}$ is rejected with probability at most $\alpha$, and if it is not rejected, then $\X$ contains no (sets of indices of) true hypotheses, which implies that $ |\N\cap I| \leq t_{\alpha}(I) $ for all $I\in \C$.
A different method,  formulated in \citet{genovese2006exceedance}, turns out to lead to the same bounds. This was first noted in the supplementary material of \citet{hemerik2019permutation} and in \citet{goeman2021only}.

We now turn to a closed testing procedure inspired by the Schweder-Spj{\o}tvoll-Storey estimator, which will lead to the same estimates as obtained in the previous sections. We only assume that Assumption 1 is satisfied and, for convenience,  that $N>0$.
Let $\mathbbm{1}(\cdot)$ be the indicator function. For every $I \in \C$, consider the local test
$$\delta(I)= \mathbbm{1}\Bigg(\big|\big\{i \in I: p_i\leq t\big\}\big|> \big|\big\{i \in I: p_i\geq 1-t\big\}\big|  \Bigg)=
\mathbbm{1}\Bigg( W_I^-> W_I^+  \Bigg),
$$
where 
\begin{equation} \label{Wbasic}
W_I^-=\big|\big\{i \in I: p_i\leq t\big\}\big|, \quad W_I^+=\big|\big\{i \in I: p_i\geq 1-t\big\}\big|.
\end{equation}
Take $\alpha=0.5$. It follows from Assumption \ref{as1} that $\mathbb{P}(\delta(\N)=1)\leq \alpha$. Consequently the bounds $t_{\alpha}(I)$, $I\in \C$, are simultaneous $50\%$-confidence upper bounds, i.e., the inequality \eqref{GSresult} is satisfied for $\alpha=0.5$.
In particular, $t_{\alpha}(\{1,...,m\})$ is a bound for the total number of true hypotheses, $N$. 
For every $1\leq a \leq m$, $Q_a$ be the set of indices of the $a$ largest \emph{p}-values, with ties broken arbitrarily.
For $t\in (0.0.5]$ we have
$$t_{\alpha}(\{1,...,m\})=\max\{J \subseteq \{1,...,m\}:   J\not\in \X\}=
\max\{1\leq a \leq m: Q_a \not\in \X \}=$$
$$ \max\{1\leq a \leq m: W^-_{Q_a} \leq W^+_{Q_a} \} =$$ 
$$ \min\Big\{m, 2\cdot |\{1\leq i \leq m: p_i\geq 1-t \}|+ |\{1\leq i \leq m: t < p_i < 1-t \}| \Big\}=$$ 
$$ \min\Big\{m, |\{1\leq i \leq m: p_i > t \}|+ |\{1\leq i \leq m: p_i \geq 1-t \}| \Big\}.$$
By a similar argument, we get the same result when $t\in(0.5,0)$.
Dividing this estimate by $m$ gives precisely our estimate $\overline{\pi}_0'$.
Thus,  based on the closed testing principle we obtain the same bound as using the argument in Section 2.1.

Now let $t\in(0,1)$ be a threshold and consider the rejected set $\R(t)=\{1\leq i \leq m: p_i\leq t\}$. Then one can check that  $t_{\alpha}(\R)$ is precisely the bound $\overline{V}(t)$ from section \ref{secmedunb}. Thus, the closed testing principle gives the same estimate as obtained before. Below, we will consider alternative local tests $\delta$, to obtain different methods.

\subsection{Different median unbiased estimates}
We will now consider a more general class of local tests $\delta$, which lead to estimates different from the ones considered until now.
Consider any non-decreasing, data-independent function $\psi:[0,1/2]\rightarrow \mathbb{R}$. 
For every $I\in \C$, define
 \begin{equation} \label{Wgeneral}
 W^-_I=\sum_{I^-} \psi(|1/2-p_i|), \quad W^+_I=\sum_{I^+} \psi(|p_i-1/2|),
 \end{equation}
 where $I^-=\{i \in I: p_i\leq t\}$,  $I^+=\{i \in I: p_i\geq 1-t\}$. 
  This is a generalization of the definition of $W^-_I$ and $W^+_I$ from section \ref{secstoreyct}. Indeed, if we take $\psi\equiv 1$, then the definitions \eqref{Wbasic} and \eqref{Wgeneral} coincide.

We make the following assumption, which is a generalization of Assumption 1 in the main paper.

\begin{assumption} \label{asgeneral}
The following holds: 
\begin{equation} \label{eqasgeneral}
\mathbb{P}\Big\{ W^-_{\N} > W^+_{\N} \Big\}\leq 0.5.
\end{equation}
 (If $N=0$, assume nothing.)
\end{assumption}

In  case $\psi\equiv 1$, the above assumption is the same as Assumption 1 in the main paper. We noted in section 2.2 that  that assumption is satisfied in particular if $(q_1,...,q_N)$ and $(1-q_1,...,1-q_N)$ have the same distribution. Note that Assumption \ref{asgeneral} is then satisfied as well for general $\psi$. 

For every $I\in \C$ we now consider the general local test 
$$\delta(I)= \mathbbm{1}( W_I^-> W_I^+),$$ 
where $W_I^-$ and $W_I^+$ depend on $\psi$ as in the definition \eqref{Wgeneral}. This general local test defines a general closed testing method that depends on $\psi$. We again denote the collection of sets rejected by the closed procedure by $\X$.
Based on this general closed tesing procedure we obtain $\psi$-dependent bounds $t_{\alpha}(I)$. Like before we have
$$t_{\alpha}(\{1,...,m\})=
\max\{1\leq a \leq m: Q_a \not\in \X \}=\max\{1\leq a \leq m: W^-_{Q_a} \leq  W^+_{Q_a} \}= $$
\begin{equation} \label{generalformulata}
\max\{1\leq a \leq m: W^-_{Q_a} \leq  W^+_{\{1,...,m\}} \}.
\end{equation}
This is a general, $\psi$-dependent, median unbiased estimator of $N$. 


Now suppose we use a rejection threshold $t\in(0,1/2]$, i.e., we reject all hypotheses with indices in $\R(t)$. For every $1\leq a \leq R(t)$, define $Q_a^t$ to be the set containing the indices of the largest $a$ \emph{p}-values that are strictly smaller than $t$ (with ties broken arbitrarily).


\begin{proposition}
Under Assumption \ref{asgeneral}, for any $t\in [0,1]$, a median unbiased estimate of $V(t)$ is $$\overline{V}_{\psi}(t) :=\max\{1\leq a \leq R(t): W^-_{Q_a^t}\leq W^+_{\{i:p_i>1-t\}}\}.$$

Dividing this by $R(t)$ gives a median unbiased estimate for the FDP:
$$\mathbb{P}(FDP(t)\leq \overline{V}_{\psi}(t)/R(t))\geq 0.5.$$
\end{proposition}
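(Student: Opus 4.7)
The plan is to derive the proposition by applying the closed-testing framework that was set up immediately above its statement. Under Assumption \ref{asgeneral} the local test $\delta(\N)=\mathbbm{1}(W^-_{\N}>W^+_{\N})$ of $H_{\N}$ has level at most $0.5$, so the Goeman--Solari simultaneous bounds \eqref{GSresult} with $\alpha=1/2$ yield
$$\mathbb{P}\Big[\bigcap_{I\in\C}\{|\N\cap I|\le t_{1/2}(I)\}\Big]\ge 0.5,$$
where $t_{1/2}(I)=\max\{|J|:J\subseteq I,\,J\notin\X\}$. Specialising to $I=\R(t)$ gives $\mathbb{P}(V(t)\le t_{1/2}(\R(t)))\ge 0.5$, and dividing by $R(t)$ (with the $0/0=0$ convention) immediately yields the FDP statement. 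The entire task therefore reduces to the algebraic identity $t_{1/2}(\R(t))=\overline{V}_{\psi}(t)$.

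For that identity I would carry out two nested optimisations. First, for a candidate $J\subseteq\R(t)$, I would determine the extension $K\supseteq J$ that minimises $W^-_K-W^+_K$, since $J\notin\X$ iff some such extension satisfies $W^-_K\le W^+_K$. Writing $K=J\cup K'$ with $K'$ disjoint from $J$: indices $i\in K'$ with $t<p_i<1-t$ contribute to neither sum, indices with $p_i\ge 1-t$ add only to $W^+_K$, and indices with $p_i\le t$ (not in $J$) add only to $W^-_K$. Assuming $\psi\ge 0$ (which I read as the intended setup, since otherwise the test direction is not naturally monotone), the minimising choice is $K^{\star}=J\cup\{i:p_i\ge 1-t\}$, so $J\notin\X$ is equivalent to $W^-_J\le W^+_{\{i:p_i\ge 1-t\}}$.

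Second, among $J\subseteq\R(t)$ of fixed size $a$, I would minimise $W^-_J=\sum_{i\in J}\psi(|1/2-p_i|)$ in order to make the inequality easiest to satisfy. Since $t\le 1/2$ and $p_i\le t$ give $|1/2-p_i|=1/2-p_i$, which is non-increasing in $p_i$, and $\psi$ is non-decreasing, the minimiser is obtained by choosing the $a$ indices of $\R(t)$ carrying the \emph{largest} $p$-values, i.e.\ the set $Q_a^t$. Hence the largest admissible $a$ is precisely the quantity defining $\overline{V}_{\psi}(t)$, up to the harmless strict/non-strict convention in $\{p_i>1-t\}$ versus $\{p_i\ge 1-t\}$, which coincide under continuity of the $p$-values and otherwise can be absorbed into the boundary convention for $W^{\pm}_I$.

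The main obstacle I expect is the first maximisation: verifying that the greedy extension $K^{\star}$ is truly optimal over \emph{all} $K\supseteq J$, so that one does not need to consider extensions obtained by, say, adding further indices with $p_i\le t$ outside $J$ or with $p_i\in(t,1-t)$. A secondary technical point is that Assumption \ref{asgeneral} is invoked only at $I=\N$, which suffices because closed testing only requires level control of the local test for $H_{\N}$ in order to deliver the simultaneous confidence bounds used in the first step.
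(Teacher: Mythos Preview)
Your proposal is correct and follows essentially the same route as the paper: both reduce to showing $t_{1/2}(\R(t))=\overline{V}_{\psi}(t)$ by (i) identifying the optimal superset of a candidate $J\subseteq\R(t)$ as $J\cup\{i:p_i\ge 1-t\}$, and (ii) noting that among subsets of $\R(t)$ of size $a$ the one minimising $W^-$ is $Q_a^t$. The paper states these two reductions more tersely and without explicitly invoking $\psi\ge 0$; your version spells out the monotonicity argument and flags the $\psi\ge 0$ hypothesis and the $>$ versus $\ge$ discrepancy, both of which are genuine but minor points the paper glosses over.
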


\begin{proof}
A median unbiased estimate of $V(t)$ is 
\begin{equation} \label{rewriteta}
t_{\alpha}(\R(t))=\max\{|I|: I\subseteq \R(t) \text{ and }I\not\in \X \}=\max\{1\leq a \leq R(t): Q_a^t \not\in \X \}.\end{equation}
Note that $Q_a^t \not\in \X$ if and only if its superset $J:=Q_a^t \cup \{i: p_i\geq 1-t \}$ is not rejected by its local test $\delta(J)$, i.e. when $W^-_{J}\leq W^+_{J}$, i.e. when $W^-_{Q_a^t}\leq W^+_{\{i: p_i\geq 1-t\}}$.

Hence the quantity \eqref{rewriteta} is equal to
$$\max\{1\leq a \leq R(t): W^-_{Q_a^t}\leq W^+_{\{i: p_i\geq 1-t\}}\}.$$ 
\end{proof}


The bounds $\overline{V}_{\psi}(t)$ can be immediately used within the theorems in the main paper to obtain confidence envelopes and FDP controlling procedures. For good performance, it can be necessary to adapt the set $\mathbb{B}$ of candidate envelopes in an appropriate way depending on the choice of $\psi$.

We will now discuss  two new examples of functions $\psi$, namely $\psi(x)=x$ and $\psi(x)=x^2$.
If $\psi(x)=x$, then for $I\in \C$ we have
$$\delta(I)= \mathbbm{1}( W_I^-> W_I^+)=$$
$$\mathbbm{1}\Big[ \sum_{i \in I^-} 0.5-p_i >  \sum_{i \in I^+} p_i-0.5  \Big]= \mathbbm{1}\Big[ |I|^{-1}\sum_{i \in I} p_i <0.5 \Big].$$
Thus the local test  simply  checks whether the average of the \emph{p}-values with indices in $I$ is below $0.5$.

For $\psi(x)=x^2$, we local test is 
$$\delta(I) =\mathbbm{1}\Big[ \sum_{i \in I^-} (0.5-p_i)^2 >  \sum_{i \in I^+} (p_i-0.5)^2  \Big].$$

The function $\psi$ `weights' the \emph{p}-values, depending on how far they are from $1/2$. If, rather than $\psi\equiv 1$, we take $\psi(x)=x$ or $\psi(x)=x^2$, then the \emph{p}-values that are far from $1/2$ receive the most weight.  The choice of $\psi$ influences the bias and variance of the $\pi_0$ and $FDP$ estimates. We found using simulations (not shown) that using $\psi\equiv 1$ often leads to a smaller expected value of the estimator of $\pi_0$ than using $\psi(x)=x$ or $\psi(x)=x^2$, but often to higher variance. The former makes intuitive sense if one looks at the formula \eqref{generalformulata} of the estimator of $N$.

\end{document}